\newif\ifarxiv\arxivtrue
\newif\ifec\ecfalse

\ifec
\documentclass[format=acmsmall, review=false]{acmart}
\usepackage{acm-ec-26}
\fi

\ifarxiv
\documentclass{article}
\usepackage{amsthm,amsmath,amsfonts}
\usepackage{hyperref}
\usepackage[numbers,sort&compress]{natbib}
\usepackage[margin=1in]{geometry}

\theoremstyle{plain}
\newtheorem{theorem}{Theorem}[section]
\newtheorem{proposition}[theorem]{Proposition}
\newtheorem{lemma}[theorem]{Lemma}
\newtheorem{corollary}[theorem]{Corollary}
\fi

\theoremstyle{plain}

\usepackage[ruled,noend]{algorithm2e} 
\usepackage{booktabs} 

\usepackage{cleveref}
\usepackage{thm-restate}
\usepackage{dsfont}
\usepackage{mathtools}
\usepackage{doi}

\usepackage{multirow}
\usepackage{xspace}
\usepackage{tikz}
\usepackage[inline]{enumitem}
\usepackage{graphicx}

\definecolor{myorange}{RGB}{230,159,0}
\definecolor{mylightblue}{RGB}{86,180,233}
\definecolor{mygreen}{RGB}{0,158,115}
\definecolor{myblue}{RGB}{0,114,178}
\definecolor{myyellow}{RGB}{240,228,66}
\definecolor{myred}{RGB}{213,94,0}

\newcommand{\R}{\mathbb{R}}
\newcommand{\x}{\mathbf{x}}
\newcommand{\y}{\mathbf{y}}
\newcommand{\B}{\mathbf{B}}

\renewcommand{\c}{\mathbf{c}}

\newcommand{\N}{\mathbb{N}}
\newcommand{\calR}{\mathcal{R}}

\renewcommand{\d}{\mathrm{d}}

\renewcommand{\P}[1]{\mathbb{P}[ #1 ]}

\newcommand{\CP}[2]{\mathbb{P}[ #1 \;\vert\; #2 ]}

\newcommand{\BiggCP}[2]{\mathbb{P}\Biggl[ #1 \;\Bigg\vert\; #2 \Biggr]}
\newcommand{\bigCP}[2]{\mathbb{P}\bigl[ #1 \;\big\vert\; #2 \bigr]}

\newcommand{\E}[1]{\mathbb{E}[ #1 ]}

\newcommand{\BiggE}[1]{\mathbb{E}\Biggl[ #1 \Biggr]}
\newcommand{\bigE}[1]{\mathbb{E}\bigl[ #1 \bigr]}

\newcommand{\CE}[2]{\mathbb{E}[ #1 \;\vert\; #2 ]}
\newcommand{\BigCE}[2]{\mathbb{E}\Bigl[ #1 \;\Big\vert\; #2 \Bigr]}

\newcommand{\bigCE}[2]{\mathbb{E}\bigl[ #1 \;\big\vert\; #2 \bigr]}

\newcommand{\ninfeasibleSGAP}{\normalfont\textsc{InfeasibleSubGAP}}
\newcommand{\nfeasibleSGAP}{\normalfont\textsc{FeasibleSubGAP}}

\crefname{lem}{Lemma}{Lemmas}
\crefname{prop}{Proposition}{Propositions}

\ifec
\setcitestyle{authoryear}
\fi
\title{Submodular Welfare Maximization with Budget Constraints \newline in the Random-Order Model}

\ifec
\author{Submission 105}
\fi

\ifarxiv
\begin{document}
\author{Max Klimm \and Martin Knaack}
\maketitle
\fi

\begin{abstract}
We study an online item-allocation problem with budgets and a submodular objective. A set of~$m$ agents is known in advance, and each agent~$j$ has a known budget.
A set of $n$~items arrives over time in a uniformly random order. When item~$i$ arrives, its cost~$c_{i,j}$ for each agent~$j$ is revealed, and the algorithm must irrevocably assign~$i$ to an agent without violating any budget constraint.
The goal is to maximize a monotone submodular function defined over all possible assignments $[n]\times[m]$.
At the time of decision, the algorithm has only oracle access to this submodular function restricted to items seen so far.
This model subsumes welfare maximization with submodular valuations, agent-specific item costs, and agent-specific budgets.

We measure the performance of an algorithm by its competitive ratio, i.e., the worst-case ratio between the algorithm’s expected value and that of the offline optimum, which knows all item costs and the full submodular function in advance.
Prior work only considered the case of a single agent and achieved a $\smash{\frac{1}{54.4}}$-competitive algorithm.
We generalize and improve this result to a polynomial-time $\alpha$-competitive algorithm with
$\smash{\alpha=\bigl(1-\frac{1}{e}\bigr)\bigl(\frac{1}{\sqrt e}-\frac{1}{2}\bigr)\approx \frac{1}{14.85}}$ for an arbitrary number of agents.

We also study the special case in which all item costs and all budgets equal~$1$ which yields an online submodular matching problem.
Prior work achieved a polynomial $\frac{1}{9.66}$-competitive  algorithm for this problem; we improve this to $\smash{e^{\sqrt{2}-2}\bigl(\sqrt{2}-1\bigr)\bigl(1-\tfrac{1}{e}\bigr)} \approx \frac{1}{6.86}$.

Both our algorithms rely on repeatedly computing $\bigl(1 - \frac{1}{e}\bigr)$-approximations for the multilinear extension of offline variants of subproblems. If super-polynomial runtime is allowed, these subproblems can be solved optimally, and our competitive ratios improve by this factor.
\end{abstract}

\ifec
\begin{document}
\begin{titlepage}
\maketitle
\end{titlepage}
\fi

\section{Introduction}

Online allocation problems are ubiquitous in market design and algorithmic decision making: Items, tasks, impressions, applicants, or requests arrive over time, and a decision maker must allocate them to agents under hard feasibility constraints. In many canonical settings, an algorithm must commit immediately to an assignment, with limited information about the future.
This online nature creates an inherent trade-off between exploitation (accepting good current offers) and exploration (leaving more capacity for accommodating possible better future offers).
The \emph{competitive ratio} quantifies the limits that online algorithms face due to this trade-off; it is defined as the worst-case ratio between the expected value achieved by the online algorithm and the value achieved by an omniscient offline algorithm that has all relevant knowledge in advance and can thus compute an optimal solution. 

A definitive answer to this trade-off has been first given for the so-called \emph{secretary problem} where the decision maker observes the value of $n$~items in a random order.
After revelation of the value of an item, the decision maker has to decide immediately whether the item is selected or discarded. At most one item can be selected in this process.
For the objective of maximizing the probability of selecting the item with the highest value, the best possible competitive ratio of $1/e$ is attained by the algorithm that discards the first $n/e$ items, and selects the first item that yields a higher value than any other item seen so far \citep{Lindley61,Dynkin63}. Also for the objective of maximizing the expected value of the selected item it achieves the same competitive ratio, and no better competitive ratio is possible \citep{CorreaDFS22}. 

While obviously important, the secretary problem is a very stylized question that falls short in modeling many issues faced in real-world allocation problems.
First, there is often more than a single agent to which allocations can be made.
This is the case, for example, on many online platforms where offers arrive over time and must be allocated to potential customers.
Second, agents may be subject to more complicated feasibility constraints. 
For example, in the AdWords problem \citep{MehtaSVV07} sponsored-search impressions arrive over time and must be allocated to advertiser subject to budget constraints.
Third, when multiple items are allocated to an agent, there often are substitute effects so that the overall value of the agent is a submodular function of the items allocated to them.

In this paper, we study a general model with $m$ agents that are known in advance that we term \emph{submodular generalized assignment problem}.
Each agent~$j$ has a budget~$B_j$ that is also known in advance.
A set of $n$ items arrives in an order picked uniformly at random from all $n!$ permutations of the set of items.
Upon arrival of an item~$i$, its cost $c_{i,j}$ for each agent~$j$ is revealed.
An online algorithm must assign each item~$i$ immediately and irrevocably to one of the agents, or has to decide that the item is not assigned at all.
Items may only be assigned to agents if the total cost of the items assigned to an agent does not exceed their budget.
The goal is to maximize a non-negative, non-decreasing and submodular function $f \colon 2^{[n] \times [m]} \to \mathbb{R}_{\geq 0}$ defined on the space $2^{[n] \times [m]}$ of all assignments of items to agents.
The function~$f$ is initially unknown and when making the decision for item~$i$, the algorithm can only query the values of all assignments involving the items that arrived so far.

\subsection{Our results}

As our main result, we obtain a polynomial-time online algorithm for the submodular generalized assignment problem that is $\alpha$-competitive where
$\smash{\alpha = \bigl(1 - \frac{1}{e}\bigr)\bigl(\frac{1}{\sqrt{e}} - \frac{1}{2}\bigr) \approx \frac{1}{14.85}}$. 
This is the first constant competitive algorithm for the submodular generalized assignment problem.
Our result further improves upon the previously best known competitive ratio of $\smash{\frac{1}{20e} \approx \frac{1}{54.4}}$ for the special case of the \emph{submodular knapsack secretary problem} which corresponds to a submodular generalized assignment problem with a single agent \citep{FeldmanNS11}.

Our algorithm first discards the first $\lfloor n/\sqrt{e} \rfloor$~items.
For each later item, it computes a $\bigl(1 - \frac{1}{e}\bigr)$-approximate solution for the multilinear extension of the problem restricted to all items seen so far by the continuous greedy algorithm of \citet{CalinescuCPV11}.
It then interprets the fractional values obtained in the approximate solution as assignment probabilities.
The algorithm first computes an infeasible assignment with the property that the removal of the last item assigned to the agent turns the assignment feasible.
Before the arrival of any item, the algorithm flips a coin.
Before any items arrive, the algorithm flips a single fair coin.
In one outcome, it retains, for every agent, all assignments preceding the first budget violation.
In the other outcome, it retains only the item
causing that violation for each agent.
This route has been taken before by \citet{KlimmK25} for the generalized assignment problem with \emph{linear} objective. 
The analysis in the submodular case, however, is much more involved.
Intuitively, the contribution of an item to the objective depends on the items that arrived \emph{before} it, but the analysis wants to avoid such dependence.
Instead, we refine an idea used by \citet{KesselheimT17} for the analysis of their algorithms for the submodular secretary and submodular secretary matching problems: We estimate the contribution of each item by its contribution after the arrival of all items that arrive \emph{later}. This avoids double counting of contributions and circumvents dependencies on the arrival order of previous items.

If super-polynomial runtime is allowed, there is no need to consider the multilinear extension, and we can compute an optimal solution of the offline problem restricted to all items seen so far in each step.
This yields a super-polynomial algorithm with a competitive ratio of $\smash{\frac{1}{\sqrt{e}} - \frac{1}{2} \approx \frac{1}{9.39}}$.

Previous work sometimes assumes a large-markets assumptions \citep[e.g.][]{MehtaSVV07,Vaze17} essentially stating that the ratio of each single item to the optimum solution is in $o_n(1)$.
Under this assumption, we can simply run with probability $1$ the part of our mechanism that assigns to each agent all items potentially except the last one that violates the budget constraint. This yields a polynomial algorithm with a competitive ratio of $\smash{\bigl(1-\frac{1}{e}\bigr)\bigl(\frac{2}{\sqrt{e}} - 1\bigr) \approx \frac{1}{7.42}}$ and a super-polynomial algorithm with a competitive ratio of $\smash{\frac{2}{\sqrt{e}} - 1 \approx \frac{1}{4.69}}$.

We further study the special case where the cost of each item and the budget of each agent is $1$, which corresponds to the \emph{submodular secretary matching problem} previously studied by \citet{KesselheimT17}.
They obtained a polynomial-time algorithm that is $\smash{\frac{1}{9.66}}$-competitive (and a super-polynomial algorithm that is $\smash{\frac{1}{4.84}}$-competitive).
We improve this to a polynomial-time algorithm that is $\smash{\frac{1}{6.86}}$-competitive (and a super-polynomial algorithm that is $\smash{\frac{1}{4.34}}$-competitive).
While the main ideas of both algorithms are similar, \citeauthor{KesselheimT17} solve the \emph{integer} offline problem in each step and tentatively assign whenever the current item appears in the solution.
By contrast, we optimize the \emph{multilinear extension} in each step and interpret the fractional values as tentative assignment probabilities.
Our improved competitive ratio for polynomial algorithms stems from to factors: First, the multilinear extension can be approximated by a factor of $\bigl(1-\frac{1}{e}\bigr)$ while the integer optimum can only be approximated by a factor of $\smash{\frac{1}{2+\epsilon}}$ for any $\epsilon > 0$.
Second, our analysis is more careful as witnessed by the slightly better competitive ratio for super-polynomial algorithms.

\subsection{Related work}

To the best of our knowledge, the submodular generalized assignment problem introduced in this paper has not been studied before in the literature. However, it contains several other problems as special cases that we discuss in this section. For a better overview of some of the different problem classes and their relationship, see also Fig.~\ref{fig:related-problems}.

\begin{figure}
\small
\begin{center}
\ifec
\begin{tikzpicture}[yscale=1.22,xscale=1]
\fi
\ifarxiv
\begin{tikzpicture}[yscale=1.5,xscale=1.33]
\fi
\draw[fill=lightgray, fill opacity=0.5,rounded corners=10pt] (0.25,0) rectangle (12,4);
\node[align=center,text width=7cm,anchor=north] (sgap) at (6,3.8) {{\bf submodular generalized assignment}\\ \textcolor{myred}{\bf $1/14.85$}};
%
%
\draw[fill=mylightblue,fill opacity=0.5,rounded corners=10pt] (0.5,0.25) rectangle (7.75,2);
\node[align=center,text width=3cm,anchor=north] (sks) at (2.5,1.625) {{\bf submodular knapsack secretary} \\ $1/54.4 \rightarrow \textcolor{myred}{1/14.85}$};
\draw[fill=mylightblue,fill opacity=0.5,rounded corners=10pt] (4.375,0.375) rectangle (11.625,3);
\node[align=center,text width=3cm,anchor=north] (ssm) at (8,3) {{\bf submodular secretary matching} \\ $1/9.66 \rightarrow \textcolor{myred}{1/6.86}$};
\draw[fill=mygreen,fill opacity=0.5,rounded corners=10pt] (4.25,-1) rectangle (11.75,1.875);
\node[align=center,text width=3cm,anchor=north] (matroid) at (8,-0.2) {{\bf submodular matroid} \\ $1/\log \log \rho$};
\draw[fill=myblue,fill opacity=0.3,rounded corners=10pt] (4.5,0.5) rectangle (11.5,1.75);
\node[align=center,text width=3cm,anchor=north] (adwords) at (9.5,1.625) {{\bf submodular transversal matroid} \\ $1/9.66 \rightarrow \textcolor{myred}{1/6.86}$};
\draw[fill=myyellow,fill opacity=0.5,rounded corners=10pt] (4.625,0.625) rectangle (7.5,1.625);
\node[align=center,text width=3cm,anchor=north] (secretary) at (6,1.625) {{\bf submodular secretary} \\ $1/5.10$};
\end{tikzpicture}
\end{center}
\caption{Problem classes for online problems in the random-order model with submodular objectives and the best known competitive ratios for polynomial algorithms. Numbers in red are achieved in this paper, numbers in black appeared in previous work.
The stated competitive ratios are for monotone submodular functions except the one for the submodular matroid problem.
\label{fig:related-problems}
}
\end{figure}

The online \emph{generalized assignment problem} in the random-order model is special case of the problem considered in this paper that appears when the function $f$ is modular.
More specifically, for each item~$i$, there is a value $v_{i,j}$ that is added to the objective when assigning the item to agent~$j$.
This problem has been first studied by \citet{KesselheimRTV18} who gave an algorithm that is $\smash{\frac{1}{8.06}}$-competitive. For the same problem, an algorithm with an improved competitive ratio of~$\smash{\frac{1}{6.99}}$ has been given by \citet{NaoriR19} and, independently, by \citet{AlbersKL21}. An algorithm with a further improved competitiveness of~$\smash{\frac{1}{6.52}}$ has been devised by \citet{KlimmK25}.
For the offline variant of the generalized assignment problem (with linear objective), \citet{FleischerGMS11} gave an algorithm with an approximation guarantee of $\smash{1-\frac{1}{e}}$. \citet{FeigeV06} improved this approximation factor slightly to $\smash{1 - \frac{1}{e} + \epsilon}$ for a small absolute constant $\epsilon > 0$.
There is no $\frac{10}{11}$-approximation for the generalized assignment problem, unless $\mathsf{P}=\mathsf{NP}$, as shown by \citet{ChakrabartyG10}.

A special case of the generalized assignment problem is the AdWords problem first studied by \citet{MehtaSVV07}. 
Here, the values and the costs coincide, i.e., $c_{i,j} = v_{i,j}$ for all $i,j$. 
They obtained a competitive ratio of $\smash{1 - \frac{1}{e}}$ under a large markets assumption that the cost of each item is small compared to the budget, but assuming adversarial rather than random order.
\citet{GoelM08} showed that a competitive ratio of~$\smash{1 - \frac{1}{e}}$ is also obtained by the greedy algorithm in the random-order model. They also use a variant of a large markets assumption. 

The offline version of this problem is also known as the \emph{maximum budgeted allocation problem}.
Without the large markets assumption, for this problem a $\bigl(1-\frac{1}{e}\bigr)$-approximation was given by \citet{AndelmanM04}.
This has been improved to a $\frac{2}{3}$-approximation by \citet{AzarBKMN08} and further to a $\frac{3}{4}$-approximation by \citet{Srinivasan08} and, independently, by \citet{ChakrabartyG10}.
\citeauthor{ChakrabartyG10} further showed that it is $\mathsf{NP}$-hard to provide an approximation better than $\frac{15}{16}$.

Another special case of the problem considered in this paper is when each agent~$j$ has a non-negative, non-decreasing and submodular function $f_j \colon 2^{[n]} \to \mathbb{R}_{\geq 0}$ and the value of an assignment $A \subset [n] \times [m]$ is defined as $\sum_{j \in [m]} f_j(\{i \in [n] \colon (i,j) \in A\})$.
This problem is known as the \emph{submodular welfare problem}. For the online variant of this problem (even with adversarial item arrivals), the natural greedy algorithm that assigns the item to the agent for which it achieves the highest marginal increase in value is $\frac{1}{2}$-competitive \citep{FisherNW78,LehmannLN06}. This competitive ratio is best-possible, unless $\mathsf{NP} = \mathsf{RP}$ as shown by \citet{KapralovPV13}.
As discussed by \citeauthor{FisherNW78}, there is a natural partition matroid defined on $[n] \times [m]$ that captures the constraint that each item can be allocated to one agent only, so submodular welfare maximization is a special case of submodular function maximization under a matroid constraint.
For the offline variant of submodular welfare approximation, \citet{Vondrak08} gave a $\bigl(1-\frac{1}{e}\bigr)$-approximation. This approximation is best-possible, unless $\mathsf{P} = \mathsf{NP}$ as shown by \citet{KhotLMM08}.
\citet{CalinescuCPV11} gave a $\smash{\bigl(1-\frac{1}{e}\bigr)}$-approximation for the more general problem of maximizing a submodular function under a matroid constraint.
Their result further implies a $\bigl(1- \frac{1}{e}-o(1)\bigr)$-approximation for the offline generalized assignment problem.

Another special case of the problem considered in this paper is the \emph{submodular knapsack secretary problem} which corresponds to the case of a single agent.
For this problem, \citet{BateniHZ13} gave an constant-competitive algorithm.
\citet{FeldmanNS11} computed an explicit value for the competitive ratio for the algorithm by \citeauthor{BateniHZ13} and gave an improved competitive ratio of $\frac{1}{20e} \approx \frac{1}{54.4}$. 
\citeauthor{BateniHZ13} further claim a $O(1)$-competitive algorithm for the non-monotone case, but the proof is only sketched.
\citet{AmanatidisKS22} gave a full proof for a $\smash{\frac{1}{1710}}$-competitive algorithm for this case.
\citet{AmanatidisMSSTT25} further studied this problem in a learning-augmented setting where a mechanism also obtains a prediction for the optimal offline value. Even when the prediction is arbitrarily bad, they obtained a competitive ratio of~$\smash{\frac{1}{445}}$. 

A special case of the submodular knapsack secretary problem is the
\emph{submodular (cardinality) secretary problem} which corresponds to the case of a single agent and items of unit cost.
The first constant competitive algorithm for this problem was given by \citet{BateniHZ13} who achieved a competitive ratio of $\smash{\frac{1-1/e}{7} \approx \frac{1}{11.1}}$.
\citet{FeldmanNS11} improved the competitive ratio to $\smash{\frac{e-1}{e^2+e} \approx \frac{1}{5.88}}$, and  \citet{KesselheimT17} further improved the competitive ratio to $0.31(1-1/e) \approx \frac{1}{5.10}$.
For the case that a non-monotone submodular function is allowed, \citet{BateniHZ13} obtained a competitive ratio of $\smash{\frac{1}{8e^2} \approx \frac{1}{59.1}}$.

Another special case of the submodular generalized assignment problem studied in this paper is the \emph{submodular secretary matching problem}. It corresponds to the case where the cost of all items and the budget of each agent is equal to $1$.
For this problem, \citet{KesselheimT17} gave a~$0.207\beta$-competitive algorithm where $\beta$ is the approximation guarantee of a polynomial algorithm for the offline variant of the problem.
Observing that bipartite matching can be phrased as the intersection of $2$~matroids and using the $\frac{1}{2+\epsilon}$-approximation by \citet{LeeSV10} for every~$\epsilon > 0$, this yields a polynomial $\smash{\frac{1}{9.66}}$-competitive algorithm for submodular secretary matching.\footnote{An earlier arXiv version \citep{KesselheimT16} claimed a slightly better~$\beta/4$-approximation, but we are not sure whether this result is correct and, thus, report here the bound from the peer-reviewed paper \citep{KesselheimT17}.}
The submodular secretary matching problem is more general than the \emph{submodular transversal matroid secretary problem} for which \citet{MaTW16} obtained a $\smash{\frac{1}{95}}$-competitive algorithm.

Both the submodular transversal matroid secretary problem and the submodular secretary problem are special cases of the \emph{submodular matroid secretary problem}. For the latter, \cite{FeldmanZ18} obtained a $O(1/\log \log \rho)$-competitive algorithm where $\rho$ is the rank of the matroid.
They do not assume that the submodular function is monotone.

Further related is the work of \citet{TangWC22} who studied $k$-submodular secretary problems.
The $k$-submodularity property is defined over $k$-disjoint subsets of a ground set and, loosely speaking, requires that the marginal gain of adding an element to one of the disjoint sets  is smaller when the disjoint sets are larger.
It is not hard to convince ourselves that monotone $k$-submodularity is more general than assuming that each of the $k$~sets has its own submodular and monotone set function and taking the sum of them.\footnote{In fact, by a characterization of \citet[]{WardZ16} a function is $k$-submodular if and only if it is submodular in every orthant and pairwise monotone. Pairwise monotonicity is trivially satisfied as we assume that the $k$-submodular function is monotone. Submodularity in every orthant means that the marginal increase of adding an element $a \notin S_1 \dot\cup S_2 \dot\cup \dots \dot\cup S_k$ to $S_i$ with $i \in [k]$ is not larger than the marginal increase when adding it to $T_i$ when the sets are $T_1 \dot\cup T_2 \dot\cup \dots \dot\cup T_k$ with $T_j \subseteq S_j$ for all $j \in [k]$. The latter condition is clearly fulfilled when having a separate submodular function for each set and maximizing the sum.}
In the problems they consider, upon arrival of an element, it has to be immediately assigned to one of the~$k$ disjoint sets or discarded.
For the problem where a constraint on the cost of each of the $k$~sets is given, they give an algorithm whose competitiveness depends on the sum of the cost constraints. This problem is more general than the submodular generalized assignment problem when all item costs are one, but it fails to capture knapsack-like constraints. \citeauthor{TangWC22} further consider the maximization of a monotone $k$-submodular function under a single knapsack constraint and achieve a polynomial $\frac{1}{313}$-approximation.
However, they only consider a single knapsack constraint over all sets which does not allow us to model individual knapsack constraints as we do.

\section{Preliminaries}

For a natural number $n \in \mathbb{N}$, we denote by $[n] = \{1,\dots,n\}$ the set of the first $n$ natural numbers.
For a finite set $E$, we denote by $\mathbb{R}^E$ the set of $|E|$-dimensional vectors (implicitly assuming a fixed order on $E$), and by $\mathbf{1}_e$ the unit vector in that space that has an entry~$1$ only on the position corresponding to $e \in E$.
Next, we introduce submodular functions, the multilinear extension of a submodular function, and we formally introduce the submodular generalized assignment problem.

\subsection{Submodular Functions and the Multilinear Extension}

Given a finite ground set $ E $, a set function $f \colon 2^E \to \R$ is said to be
\begin{enumerate}[label=(\roman*)]
    \item normalized, if $f(\emptyset) = 0$,
    \item monotone, if $ f(A) \geq f(B) $ for every $B \subseteq A \subseteq E $,
    \item submodular, if $ f(A) + f(B) \geq f(A \cup B) + f(A \cap B)$ for every $ A, B \subseteq E $.
\end{enumerate}
For $ A, B \subseteq E$ we let $f(A \mid B) = f(A \cup B) - f(B)$ denote the marginal increase of $A$ with respect to~$B$.
For a single element $ e \in E$ and $ A \subseteq E$, we write $f(e)$ instead of $f(\{e\})$ and $f(e \mid A)$ instead of $f(\{e\} \mid A)$. 
A set function $f \colon 2^E \to \R$ is monotone if and only if $f(e \mid A) \geq 0$ for every $A \subseteq E$ and $e \in E$ and it is submodular if and only if it satisfies the diminishing returns property, i.e., $f(e \mid A) \leq f(e \mid B)$ for every $ B \subseteq A \subseteq E $ and $e \in E \setminus A$.
For further characterizations, see \citet{NemhauserWF78}.

Any set function $f$ defined on a finite ground set $E$ can also be seen as a function over binary points $ \{0,1\}^E $. 
An extension of a function $f \colon \{0,1\}^E \to \R$ to the hypercube $[0,1]^E$ is a function $F \colon [0,1]^E \to \R$ that satisfies $f(\x) = F(\x)$ for every $\x = (x_e)_{e \in E} \in \{0,1\}^E$.
As defined by \citet[Definition~2.1]{CalinescuCPV11}, a function $F \colon [0,1]^E \to \R$ is smooth monotone submodular if 
\begin{enumerate}[label=(\roman*)]
    \item $ F \in C^2([0,1]^E)$, i.e., $F$ has continuous second-order partial derivatives on $[0,1]^E$,
    \item for every $e \in E$ and $ \x \in [0,1]^E $, we have $ \smash{\frac{\partial F}{\partial x_e}(\x) \geq 0}$,
    \item for every $e,e' \in E$ and $ \x \in [0,1]^E $, we have $ \smash{\frac{\partial^2 F}{\partial x_e \partial x_{e'}}(\x) \leq 0}$.
\end{enumerate}
The second condition implies monotonicity, i.e., for every $\x,\y \in [0,1]^E $ with $\x \geq \y$ componentwise, we have $F(\x) \geq F(\y)$. 
Further, the third condition implies a continuous analogy of submodularity, which states for every $\x,\y \in [0,1]^E$ that $ F(\x) + F(\y) \geq F(\x \lor \y) + F(\x \land \y) $, where the operators $ \lor $ and $ \land $ are defined componentwise by $ (\x \lor \y)_e = \max\{x_e,y_e\} $ and $ (\x \land \y)_e = \min\{x_e,y_e\} $ for every $e \in E$.
We also define a marginal increase for~$F$ by $F(\x \mid \y) = F(\x \lor \y) - F(\y)$ for all $\x, \y \in [0,1]^E$.

An extension that has been extensively used for the maximization of submodular functions is the multilinear extension introduced by \citet{CalinescuCPV11}. For a given function $f \colon 2^E \to \R $, the multilinear extension $F \colon [0,1]^E \to \R$ of $f$ is defined for every $\x = (x_e)_{e \in E} \in [0,1]^E$ by $ F(\x) = \E{f(X)} $ where $X \subseteq E$ is a random set that contains each $e \in E$ independently with probability $x_e$. Then, it holds that
\begin{equation*}
    F(\x) = \sum_{R \subseteq E} \P{X = R} \, f(R) = \sum_{R \subseteq E} f(R) \, \prod_{e \in R} x_e \, \prod_{e \notin R} (1-x_e).
\end{equation*}
The same authors show that the multilinear extension $F$ of a monotone and submodular set function~$f$ is smooth monotone submodular. 
The given properties can be used to generalize well-known inequalities for monotone and submodular set functions.
We state such an inequality in the following proposition. For the proof of this inequality, we refer to \citet[Proposition~1]{Wolsey82}.

\begin{proposition}
    \label{prop:extended-submodular-inequality}
    Let $F \colon [0,1]^E \to \R$ be a smooth monotone submodular function. Then, we have for every $\x,\y \in [0,1]^E$ that
            $F(\x) \leq F(\y) + \sum_{e \in E: x_e > y_e} F( x_e \cdot \mathbf{1}_e \mid \y )$.
\end{proposition}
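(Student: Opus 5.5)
The plan is a telescoping argument that mirrors the discrete inequality $f(A)\le f(B)+\sum_{e\in A\setminus B} f(e\mid B)$. Let $S=\{e\in E : x_e>y_e\}$ and enumerate it as $e_1,\dots,e_k$. First, by monotonicity of $F$ (condition (ii)), $F(\x)\le F(\x\lor\y)$. Next, define the intermediate points $\mathbf{z}^0=\y$ and $\mathbf{z}^t=\mathbf{z}^{t-1}\lor x_{e_t}\mathbf{1}_{e_t}$ for $t\in[k]$. Since $\x\lor\y$ agrees with $\y$ outside $S$ and with $\x$ on $S$, we have $\mathbf{z}^k=\x\lor\y$. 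Telescoping then gives
\[
F(\x\lor\y)-F(\y)=\sum_{t=1}^k \bigl(F(\mathbf{z}^t)-F(\mathbf{z}^{t-1})\bigr).
\]

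The key step is to bound each increment by $F(x_{e_t}\mathbf{1}_{e_t}\mid \y)=F(\y\lor x_{e_t}\mathbf{1}_{e_t})-F(\y)$. Both increments raise only coordinate $e_t$, from $y_{e_t}$ to $x_{e_t}$: in $\mathbf{z}^{t-1}$ that coordinate still equals $y_{e_t}$, and $\mathbf{z}^{t-1}\ge\y$ componentwise. Writing each difference as an integral of $\partial F/\partial x_{e_t}$ along the segment in direction $\mathbf{1}_{e_t}$ reduces the claim to
\[
\frac{\partial F}{\partial x_{e_t}}(\mathbf{w})\le \frac{\partial F}{\partial x_{e_t}}(\mathbf{v}) \quad\text{whenever } \mathbf{w}\ge \mathbf{v}.
\]
This inequality follows from condition (iii): integrating the nonpositive mixed second derivatives $\partial^2F/\partial x_{e_t}\partial x_{e'}$ along a monotone path from $\mathbf{v}$ to $\mathbf{w}$ shows the partial derivative is nonincreasing. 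Alternatively, apply the continuous submodularity inequality $F(\mathbf{a})+F(\mathbf{b})\ge F(\mathbf{a}\lor\mathbf{b})+F(\mathbf{a}\land\mathbf{b})$ with $\mathbf{a}=\mathbf{z}^{t-1}$ and $\mathbf{b}=\y\lor x_{e_t}\mathbf{1}_{e_t}$. Their join is $\mathbf{z}^t$, and their meet is $\y$, because $\mathbf{a}$ has coordinate $y_{e_t}$ at $e_t$ and $\mathbf{a}\ge\y$ elsewhere. Rearranging yields exactly $F(\mathbf{z}^t)-F(\mathbf{z}^{t-1})\le F(\y\lor x_{e_t}\mathbf{1}_{e_t})-F(\y)$, with no calculus needed.

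Summing these bounds over $t$ and combining with $F(\x)\le F(\x\lor\y)$ gives the proposition. The only delicate point is checking that the meet computation is correct coordinatewise, in particular at $e_t$ and on coordinates outside $S$ where $\y$ dominates. I expect this to be the main (mild) obstacle; the rest is bookkeeping.
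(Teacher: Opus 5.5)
Your proof is correct. The join of $\mathbf{z}^{t-1}$ and $\y\lor x_{e_t}\mathbf{1}_{e_t}$ is $\mathbf{z}^t$, and their meet is $\y$, because coordinate $e_t$ of $\mathbf{z}^{t-1}$ is still $y_{e_t}$. So the lattice inequality stated after the definition of smooth monotone submodularity bounds each increment, and monotonicity gives $F(\x)\le F(\x\lor\y)$.

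The paper gives no proof of its own and only cites Wolsey's Proposition~1, which is the set-function version proved by exactly this telescoping argument. Your write-up is the continuous transcription of that proof, and it actually supplies the continuous case that the citation glosses over.
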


Based on the multilinear extension $F$ of a given monotone and submodular function $f$, \citet{CalinescuCPV11} developed the continuous greedy algorithm. We summarize one of their results in the following.

\begin{lemma}[\citet{CalinescuCPV11}]
    \label{lem:continuous-greedy}
    For $n \in \N$, let $F \colon [0,1]^n \to \R$ be the multilinear extension of a monotone and submodular function $f$ and let $P \subseteq [0,1]^n$ be a solvable polytope\footnote{A polytope $ P \subseteq [0,1]^n $ is solvable if we can efficiently do linear optimization over $P$, or more formally, for every $ \c \in \R^n $, there is a polynomial-time algorithm that solves $ \arg\max_{\x \in P} \c^\top \x $}. The continuous greedy algorithm computes $\y \in P$ such that $\smash{F(\y) \geq \bigl(1 - \frac{1}{e}\bigr) \max_{\x \in P} F(\x)}$.
\end{lemma}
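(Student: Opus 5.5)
The plan is to analyze an idealized continuous-time process first and then discretize it. Let $\x^* \in \arg\max_{\x \in P} F(\x)$ and $\mathrm{OPT} = F(\x^*)$. Define a trajectory $\y \colon [0,1] \to [0,1]^n$ by $\y(0) = \mathbf{0}$ and $\frac{\d \y}{\d t}(t) = \mathbf{v}(t)$, where $\mathbf{v}(t) \in \arg\max_{\mathbf{v} \in P} \mathbf{v}^\top \nabla F(\y(t))$. Each step of this rule is a linear optimization over $P$, which solvability lets us carry out. The algorithm outputs $\y(1) = \int_0^1 \mathbf{v}(t)\,\d t$. This point is an average of points of $P$, so it lies in $P$ because $P$ is convex. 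It also stays in $[0,1]^n$.

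The key step is the inequality
\[
  \mathbf{v}(t)^\top \nabla F(\y(t)) \;\geq\; (\x^*)^\top \nabla F(\y(t)) \;\geq\; F(\x^* \lor \y) - F(\y) \;\geq\; \mathrm{OPT} - F(\y(t)),
\]
with $\y = \y(t)$. The first inequality holds by the choice of $\mathbf{v}(t)$. For the second, consider the direction $\mathbf{d} = (\x^* \lor \y) - \y \geq \mathbf{0}$, which satisfies $\mathbf{d} \leq \x^*$ componentwise. The function $s \mapsto F(\y + s\mathbf{d})$ is concave on $[0,1]$, because all second partial derivatives of $F$ are nonpositive and $\mathbf{d} \geq \mathbf{0}$. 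Hence $F(\x^* \lor \y) - F(\y) \leq \mathbf{d}^\top \nabla F(\y)$. Since $\nabla F \geq \mathbf{0}$ and $\mathbf{d} \leq \x^*$, this is at most $(\x^*)^\top \nabla F(\y)$. The third inequality is monotonicity of $F$.

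By the chain rule, $\frac{\d}{\d t} F(\y(t)) = \mathbf{v}(t)^\top \nabla F(\y(t)) \geq \mathrm{OPT} - F(\y(t))$. Solving this differential inequality, for instance by differentiating $e^{t}\bigl(\mathrm{OPT} - F(\y(t))\bigr)$, gives $F(\y(t)) \geq (1 - e^{-t})\,\mathrm{OPT}$. At $t = 1$ this is the claimed $\bigl(1 - \frac{1}{e}\bigr)$ bound.

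The main obstacle is making the process polynomial-time, which requires two things.

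\emph{Discretization.} Replace the flow by $1/\delta$ steps $\y \leftarrow \y + \delta \mathbf{v}$ with $\delta = 1/\mathrm{poly}(n)$. Use $\bigl|\frac{\partial^2 F}{\partial x_e \partial x_{e'}}\bigr| \leq \max_e f(e) \leq \mathrm{OPT}$, where the last inequality needs all singletons feasible, or a preprocessing step. This bounds the per-step loss by $O(n^2\delta^2)\,\mathrm{OPT}$.

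\emph{Gradient estimation.} Replace $\nabla F$ by estimates from polynomially many samples of the random set. Each partial derivative is $\E{f(e \mid X)}$ for the corresponding random set $X$, so Chernoff bounds give additive error $\delta\,\mathrm{OPT}$ per coordinate with high probability.

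Combining both yields the recursion $\mathrm{OPT} - F(\y_{k+1}) \leq (1-\delta)\bigl(\mathrm{OPT} - F(\y_k)\bigr) + O(n^2\delta^2)\,\mathrm{OPT}$. This gives $F(\y) \geq (1 - \frac{1}{e} - o(1))\,\mathrm{OPT}$.

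Removing the additive $o(1)$ to obtain exactly $1 - \frac{1}{e}$ is the delicate part. I would first attempt it as \citet{CalinescuCPV11} do: take a sufficiently fine step size so that the loss is at most $\frac{1}{\mathrm{poly}(n)}\,\mathrm{OPT}$, and absorb that loss using the lower bound $\mathrm{OPT} \geq \max_e f(e)$. If that fails, I would state the guarantee with a negligible $o(1)$ loss, which is harmless for all uses later in the paper.
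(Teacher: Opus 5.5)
Your continuous-time argument is the standard one from \citet{CalinescuCPV11}, which the paper imports without proof. The core steps are correct: the inequality $\mathbf{v}(t)^\top\nabla F(\y(t))\ge\mathrm{OPT}-F(\y(t))$ via concavity along the nonnegative direction $(\x^*\lor\y)-\y\le\x^*$, the differential inequality, and the discretization and sampling error bounds. Together they give $F(\y)\ge\bigl(1-\frac1e-O(n^2\delta)\bigr)\,\mathrm{OPT}$ with high probability.

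The gap is the last step, which is exactly where the stated constant $1-\frac1e$ lives. Your proposed absorption does not work. The inequality $\mathrm{OPT}\ge\max_e f(e)$ is the very inequality you already used to write the Hessian and sampling errors as $O(n^2\delta)\,\mathrm{OPT}$. It shows the loss is a small multiple of $\mathrm{OPT}$, but it provides no positive slack to cancel it, so your bound still reads $1-\frac1e-\frac{1}{\mathrm{poly}(n)}$. In your analysis only the idealized continuous process attains $1-\frac1e$ exactly.

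Cancelling an additive $\varepsilon\,\mathrm{OPT}$ requires a step that gains $\Omega(\varepsilon)\,\mathrm{OPT}$. One example is guessing one element $a$ of an integral optimum $O$ and using $f(a)\ge\mathrm{OPT}/|O|$: then $f(a)+(1-\frac1e-\varepsilon)(\mathrm{OPT}-f(a))\ge(1-\frac1e)\,\mathrm{OPT}$ once $\varepsilon\le 1/(e(|O|-1))$. Such an argument needs an integral optimum supported on few elements, which a fractional maximizer of $F$ over an arbitrary solvable polytope need not provide. Moreover, gradient sampling makes your guarantee hold only with high probability, not deterministically as the lemma states.

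So what you actually prove is the $\bigl(1-\frac1e-o(1)\bigr)$ version, in expectation. That version is all the paper uses: $\tilde{\x}(Q_\ell)$ enters only through $\mathbb{E}_{\tilde{\x}}[F(\tilde{\x}(Q_\ell))]$, and the competitive ratio already carries a $-o(1)$ term. Your singleton-feasibility requirement is also met there, since the paper assumes $c_{i,j}\le B_j$. Your fallback is therefore the right formulation for the paper's purposes, but it is a strictly weaker statement than the lemma as written.
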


\subsection{Submodular Generalized Assignment Problem}

In the submodular generalized assignment problem each agent $j \in [m]$ has a budget~$B_j \in \R_{> 0}$ and every item has an agent-dependent cost given by a vector $\c_j = (c_{i,j})_{i \in [n]}$. 
The goal is to select $m$ disjoint sets of items, one for each agent, such that the total cost of a set given to an agent does not exceed its budget and such that a normalized monotone submodular function $f$ is maximized.
We let $\x = (x_{i,j})_{i \in [n],j \in [m]} \in \{0,1\}^{n \times m}$ denote the assignment matrix, i.e., $x_{i,j} = 1$ if item $i \in [n]$ is assigned to agent $j \in [m]$ and $x_{i,j} = 0$ otherwise.
Then, with $f \colon \{0,1\}^{n \times m} \to \R$, we can describe the problem as follows:
\begin{subequations}
\label{eq:main-problem}
\begin{alignat}{3}
    & \text{maximize }& f(\x) \hskip2.5em &&& \nonumber \\
    & \text{subject to }& \sum_{i \in [n]} c_{i,j} \, x_{i,j} &\leq B_j &\quad &\text{for every } j \in [m],  \label{eq:SubmodularGapConstraint1}\\
    &&\sum_{j \in [m]} x_{i,j} &\leq 1 &\quad &\text{for every } i \in [n], \label{eq:SubmodularGapConstraint2}\\
    &&x_{i,j} &\in \{0,1\} &\quad &\text{for every } i \in [n], j \in [m]. \label{eq:SubmodularGapConstraint3} 
\end{alignat}
\end{subequations}
We refer to this problem as the submodular generalized assignment problem.
Without loss of generality, we assume that $c_{i,j}\leq B_j$ for all $i\in[n]$ and $j\in[m]$. Indeed, assignments with $c_{i,j}>B_j$ cannot
occur in any feasible solution.
We may consider a different instance where $c_{i,j} = B_j$ and $x_{i,j}$ does not contribute to the objective, so that we can ignore this assignment when computing optimal solutions.
If we denote the multilinear extension of $f$ by $F \colon [0,1]^{n \times m} \to \R$, we can also consider the following relaxation of~\eqref{eq:main-problem}:
\begin{align}
    \begin{aligned}
        &\text{maximize } F(\x) \quad &&\text{subject to \eqref{eq:SubmodularGapConstraint1}, \eqref{eq:SubmodularGapConstraint2}, and } \\
        &&& \phantom{\text{subject to }} x_{i,j} \geq 0 \quad \text{for every } i \in [n], j \in [m].
    \end{aligned}
\end{align}
We remark that submodular generalized assignment problem is a generalization of the budgeted submodular welfare maximization problem, where each agent~$j \in [m]$ has a normalized monotone submodular valuation function $f_j \colon \{0,1\}^n \to \R $ and the goal is to maximize the social welfare $ \sum_{j \in [m]} f_j(\x_j) $, where $\x_j = (x_{i,j})_{i \in [n]}$, subject to the constraints \eqref{eq:SubmodularGapConstraint1}, \eqref{eq:SubmodularGapConstraint2}, and \eqref{eq:SubmodularGapConstraint3}. The relationship holds since $f(\x) = \sum_{j \in [m]} f_j(\x_j)$ defines a normalized monotone and submodular function $f$.

Specifically, we consider an online variant of the submodular generalized assignment problem where the items are revealed in a random order. 
At the beginning, an algorithm only knows the total number of items $n$ and the budget~$B_j$ of every agent~$j \in [m]$. 
The items are then revealed to the algorithm in an order that is drawn uniformly at random from the set of all permutations $\Pi = \{\pi \mid \pi \colon [n] \to [n] \text{ is a permutation}\}$.
In every round $\ell \in [n]$, an item~$i = \pi(\ell)$ is revealed, and the algorithm observes its cost vector~$\c_i = (c_{i,j})_{j \in [m]}$ and may query the function~$f$ for assignments of the revealed items $\{\pi(1),\dotsc,\pi(\ell)\}$.
The algorithm then has to decide immediately and irrevocably to which agent the item is assigned, or to leave the item unassigned.

We evaluate the performance of such an online algorithm by \emph{competitive analysis}. We say that an algorithm in the random-order model is \emph{$\alpha$-competitive} for some $\alpha \in [0,1]$, if, for every instance, the assignment $\x$ computed by the algorithm is feasible and satisfies 
\begin{equation*}
    \bigE{f(\x)} \geq \bigl(\alpha - o(1)\bigr) \, f(\x^*),
\end{equation*}
where $\x^*$ denotes an optimal assignment that knows the functions $f$ and costs beforehand, i.e., $\x^*$ is an optimal solution for the mathematical program~\eqref{eq:main-problem}.
Note that the expectation is taken over the random permutation and over randomized decisions of the algorithm and that $o(1)$ denotes a term that vanishes for $n \to \infty$. 

\section{Submodular Generalized Assignment}
\label{sec:submodular-assignment}

In this section, we present our main result for the submodular generalized assignment problem in the random-order model. 
Apart from the budgeted submodular welfare maximization problem, the submodular generalized assignment problem also generalizes the knapsack problem with a monotone submodular objective if $ m = 1$ and, if all budgets and all costs are equal to $1$, the problem becomes the submodular secretary matching problem.



Our main result of this section is as follows.

\begin{theorem}
    \label{theo:general-beta-approximation}
    There exists an $\alpha$-competitive randomized algorithm for the submodular generalized assignment problem in the random-order model with
    \begin{equation*}
        \alpha = \beta \biggl( \frac{1}{\sqrt{e}} - \frac{1}{2} \biggr) \approx \frac{\beta}{9.39},
    \end{equation*}
where $\beta$ is the approximation for computing an offline optimal solution (or the multilinear extension thereof).
\end{theorem}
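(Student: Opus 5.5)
We analyse the algorithm sketched in the introduction. The first $\lfloor n/\sqrt{e}\rfloor$ items are discarded. For every later round $\ell$, with revealed set $S_\ell=\{\pi(1),\dots,\pi(\ell)\}$, we compute by Lemma~\ref{lem:continuous-greedy} (or exactly) a fractional point $\x^{(\ell)}$ of the relaxation restricted to $S_\ell\times[m]$ with $F(\x^{(\ell)})\ge\beta\,\mathrm{OPT}(S_\ell)$. The current item $i=\pi(\ell)$ is tentatively assigned to agent $j$ with probability $x^{(\ell)}_{i,j}$. Before any item arrives, a fair coin is flipped. On one outcome, each agent keeps its tentative items up to, but excluding, the first item whose addition would exceed $B_j$. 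On the other outcome, each agent keeps only that first violating item, which is feasible since $c_{i,j}\le B_j$. Each outcome is feasible, and for each agent the two kept sets together contain every tentative item up to and including the first violation. Hence by submodularity $\mathbb{E}[f(\text{output})]\ge\frac12\,\mathbb{E}[f(T)]$, where $T$ is the set of tentative assignments before the first violation plus the violating item. So it suffices to show $\mathbb{E}[f(T)]\ge(\frac{2}{\sqrt e}-1)\beta\,f(\x^*)-o(1)$.

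\textbf{Lower-bounding $f(T)$.} We write $f(T)$ as a telescoping sum of marginal contributions. Following the Kesselheim--Tönnis idea, we order the sum in \emph{reverse} arrival order: the contribution of a tentative pair $(i,j)$ assigned in round $\ell$ is bounded below by its marginal with respect to the later-accepted pairs, and by submodularity by its marginal with respect to the full candidate set from rounds $>\ell$. Random-order arguments then handle these later items independently of the history. The per-round quantity we want is the expected marginal gain in round $\ell$, conditioned on the set $S_\ell$ but not its order. Here $\pi(\ell)$ is uniform in $S_\ell$, so this gain is $\frac1\ell\sum_{i\in S_\ell}\sum_j x^{(\ell)}_{i,j}\cdot(\text{marginal of }(i,j))$. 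Applying Proposition~\ref{prop:extended-submodular-inequality} to $F$ with the comparison point being the later tentative assignment, this sum is at least $\frac1\ell\bigl(F(\x^{(\ell)})-F(\text{later stuff})\bigr)$. Sampling shows $\mathbb{E}[\mathrm{OPT}(S_\ell)]\ge\frac{\ell}{n}f(\x^*)$, since restricting $\x^*$ to $S_\ell$ is feasible and a uniform random subset of density $\ell/n$ retains a $\ell/n$ fraction by the multilinear-extension concavity along rays. Thus the round-$\ell$ contribution is at least roughly $\frac{\beta}{n}f(\x^*)$ minus a correction, provided agent $j$ has not yet been ``closed'' by a violation.

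\textbf{Handling the budget: the main obstacle.} We must show that, when item $i$ is proposed to $j$ in round $\ell$, agent $j$ has not yet had its first violation with large enough probability, and that this event is independent enough of the marginal value. In each earlier round $k$, the expected cost of the tentative item proposed to $j$ is $\frac1k\sum_i c_{i,j}x^{(k)}_{i,j}\le B_j/k$. So the expected tentative load of $j$ from rounds $n/\sqrt e<k<\ell$ is at most $B_j\ln\frac{\ell\sqrt e}{n}$, and Markov's inequality gives that $j$ is still open with probability at least $1-\ln\frac{\ell\sqrt e}{n}$. Because $T$ includes the violating item, openness only requires that the load before round $\ell$ is at most $B_j$, so Markov applies cleanly. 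The delicate part is decoupling this event from the marginal value. We plan to condition on $S_\ell$ and use that, given $S_\ell$, the load before $\ell$ depends only on the order of $S_{\ell-1}$ and the coins, whereas the reverse-order marginal bound depends only on items after $\ell$ together with $S_\ell$. The Markov bound then holds conditionally and multiplies in. Summing over $\ell$ from $n/\sqrt e$ to $n$ gives
\[
\beta f(\x^*)\int_{1/\sqrt e}^{1}\bigl(1-\ln(\sqrt e\,t)\bigr)\,\d t=\beta\Bigl(\tfrac{2}{\sqrt e}-1\Bigr)f(\x^*).
\]
Halving this gives $\alpha=\beta(\frac1{\sqrt e}-\frac12)$. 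We will need to check that the correction terms from the reverse-order marginals are absorbed, which we expect to follow by the same telescoping used for submodular secretary, up to an $o(1)$ term.
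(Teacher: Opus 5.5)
Your outline has the same architecture as the paper: Markov on the tentative load, marginals taken with respect to later tentative assignments, \Cref{prop:extended-submodular-inequality}, the $\frac{\ell}{n}$ sampling bound, and the coin flip costing a factor $\frac12$. However, the step that actually produces the constant is missing, and what you put in its place is wrong. Your per-round bound is $\mathbb{E}[f(\y^{(\ell)}\mid\y^{(\ell+)})]\ge\frac{\beta}{n}f(\x^*)-\frac1\ell\,\mathbb{E}[f(\y^{(\ell+)})]$, and the correction is not $o(1)$. Indeed $\sum_{t<\ell\le n}\frac1\ell=\Theta(1)$, while $\mathbb{E}[f(\y^{(\ell+)})]$ is in general of order $f(\x^*)$, and nothing in your argument makes it small. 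Your own display shows this: $\int_{1/\sqrt e}^{1}\bigl(1-\ln(\sqrt e\,t)\bigr)\,\mathrm{d}t=\frac32-\frac{2}{\sqrt e}\approx0.287$, not $\frac{2}{\sqrt e}-1\approx0.213$. The difference $\frac52-\frac4{\sqrt e}\approx0.074$ is what the corrections cost in the paper's analysis. Had the corrections really been negligible, the best threshold would be $t=n/2$ with value $1-\ln2$; the threshold $n/\sqrt e$ optimizes only the corrected bound.

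The missing idea is the bookkeeping in \Cref{lem:infeasiblegap-submodular-total-value}: you cannot bound the corrections away, you must charge them against later terms of the same sum. Start from $\sum_{\ell}\bigl(1-\sum_{k=t+1}^{\ell-1}\frac1k\bigr)\mathbb{E}[f(\y^{(\ell)}\mid\y^{(\ell+)})]$ and apply the per-round bound to rounds $t+1,t+2,\dots$ in increasing order. Each time, expand $\mathbb{E}[f(\y^{((h+1)+)})]=\sum_{\ell>h+1}\mathbb{E}[f(\y^{(\ell)}\mid\y^{(\ell+)})]$, so the negative term lowers the coefficients of the not-yet-processed marginals. By induction, round $h+1$ carries coefficient $\frac{2t}{h}-1$ when it is processed. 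This coefficient must be nonnegative for the lower bound to be applied, which is where $t\ge n/2$ enters. The result is $\frac{\beta}{n}\sum_{\ell=t+1}^{n}\bigl(\frac{2t}{\ell-1}-1\bigr)f(\x^*)\ge\beta\bigl(\frac tn(2\ln\frac nt+1)-1\bigr)f(\x^*)$, maximized at $t/n=1/\sqrt e$ with value $\frac2{\sqrt e}-1$.

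Two smaller imprecisions also need fixing. First, the decoupling must condition on $\pi(\ell),\dots,\pi(n)$ and $j^{(\ell)},\dots,j^{(n)}$, which fix both the marginal and $Q_{\ell-1}$; conditioning merely on $S_\ell$ is not enough. Second, $S_\ell$ is a uniform $\ell$-subset rather than an independent sample. So $\mathbb{E}[\mathrm{OPT}(S_\ell)]\ge\frac{\ell}{n}f(\x^*)$ should come from telescoping over a fixed item order using only $\mathbb{P}[i\in S_\ell]=\frac{\ell}{n}$, not from concavity of $F$ along rays.
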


The corresponding algorithm relies on a subroutine that provides potentially fractional solutions for the offline submodular generalized assignment problem.
Without requiring a polynomial-time algorithm, we can achieve $\beta=1$ by solving the offline problem to optimality.
However, for a polynomial algorithm, we can obtain $\smash{\beta = \bigl(1- \frac{1}{e}\bigr)}$ with the continuous greedy algorithm of \citet{CalinescuCPV11}; see \Cref{lem:continuous-greedy}.

\begin{corollary}
    \label{coro:polynomial-approximation}
    The submodular generalized assignment problem in the random-order model admits
    \begin{enumerate}
        \item a polynomial $\alpha$-competitive randomized algorithm with
        $\alpha = \bigl( 1 - \frac{1}{e} \bigr) \bigl( \frac{1}{\sqrt{e}} - \frac{1}{2} \bigr) \approx \frac{1}{14.85}$, and
        \item an $\alpha$-competitive randomized algorithm  with
        $\alpha = \bigl( \frac{1}{\sqrt{e}} - \frac{1}{2} \bigr) \approx \frac{1}{9.39}$.
    \end{enumerate}
\end{corollary}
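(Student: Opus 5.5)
The corollary follows directly from \Cref{theo:general-beta-approximation} once we fix the subroutine that supplies the (fractional) offline solutions in each round. We therefore only check that each choice of subroutine satisfies the hypotheses under which the theorem gives the factor~$\beta$: in every round the subroutine must return a feasible point of the relaxation restricted to the items seen so far, and its multilinear value must be at least $\beta$ times the optimum of that restricted problem.

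\textbf{Part 1 (polynomial time).} We use the continuous greedy algorithm of \Cref{lem:continuous-greedy}. The feasible region is given by the constraints \eqref{eq:SubmodularGapConstraint1}, \eqref{eq:SubmodularGapConstraint2} and $x_{i,j} \ge 0$, restricted to the revealed items (all other coordinates fixed to~$0$). This is a polytope in $[0,1]^{n\times m}$, because $x_{i,j}\le 1$ follows from \eqref{eq:SubmodularGapConstraint2}. It is described by polynomially many linear inequalities, so linear optimization over it is possible in polynomial time by linear programming; hence it is solvable.

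The function $f$ restricted to assignments of revealed items is monotone and submodular, and the algorithm can query it. Continuous greedy evaluates the multilinear extension $F$ by random sampling. This introduces an arbitrarily small additive or multiplicative loss with high probability, which is absorbed into the $o(1)$ term of the competitive ratio or into a $(1-1/e-\epsilon)$ guarantee with $\epsilon\to 0$.

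Every integral solution of the restricted offline problem is a point of this polytope, so the fractional maximum is at least the integral optimum. Continuous greedy therefore yields $\beta = 1-\frac1e$ relative to the benchmark the theorem uses. Polynomial running time follows because the algorithm makes one call per round, and there are $n$ rounds. Substituting into the theorem gives $\bigl(1-\frac1e\bigr)\bigl(\frac{1}{\sqrt e}-\frac12\bigr)\approx \frac{1}{14.85}$.

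\textbf{Part 2 (no running-time bound).} We solve the restricted offline integer program \eqref{eq:main-problem} exactly in each round, for example by enumerating all feasible assignments of the revealed items. This gives an integral, and hence fractional, solution $\x$ with $F(\x)=f(\x)$ equal to the optimum, so $\beta=1$. The theorem then yields $\frac{1}{\sqrt e}-\frac12\approx\frac{1}{9.39}$.

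\textbf{Main obstacle.} The only delicate point is the sampling error in evaluating $F$ in Part~1. It must not violate feasibility, and it does not: feasibility is preserved by construction, since continuous greedy outputs a convex combination of vertices of the polytope. Its effect on the approximation factor must also be controlled; this works because the error can be made $1/\mathrm{poly}(n)$ relative to the optimum and is therefore absorbed into the $o(1)$ term in the definition of $\alpha$-competitiveness.
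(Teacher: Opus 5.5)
Your proposal is correct and follows the paper's argument: the corollary comes from instantiating \Cref{theo:general-beta-approximation} in two ways. The first uses $\beta = 1-\frac{1}{e}$ via the continuous greedy algorithm of \Cref{lem:continuous-greedy} over the (solvable) relaxation polytope restricted to revealed items; the second uses $\beta = 1$ by solving each restricted offline problem exactly. Your extra remarks on why that polytope is solvable and on absorbing the sampling error of continuous greedy into the $o(1)$ term are sound refinements of details the paper leaves implicit.
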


To achieve this result, we use the approach by \citet{KlimmK25} for the generalized assignment problem with linear objectives.
First, we design an algorithm that computes a potentially infeasible assignment with the property that the removal of at most one item for each agent makes the assignment feasible.
We formalize the algorithm in \Cref{alg:ro-infeasible-submodular-gap} and refer to it as \ninfeasibleSGAP.
The algorithm consists of a sampling phase and an assignment phase.
In each round~$\ell$ of the assignment phase, the algorithm first computes a fractional assignment~$\tilde{\x}(Q_\ell) $ using only the set of items $Q_\ell$ revealed up to round~$\ell$.
This fractional assignment is a $\beta$-approximation of the optimal binary assignment $\x^*(Q_\ell)$ on the revealed items, i.e., $F(\tilde{\x}(Q_\ell)) \geq \beta \, f(\x^*(Q_\ell))$. 
Then, it first randomly selects an agent based on the assignment of the current item in the fractional assignment and then it creates a tentative assignment where the current item is assigned to the randomly selected agent. Finally, the tentative assignment is realized if previous assignments do not exceed the budget of the selected agent.

\begin{algorithm}[t]
    \SetArgSty{textnormal}
    \KwIn{submodular generalized assignment instance $ (f,\c,\B)$, random permutation $\pi$}
    \vspace{-.04cm}
    \KwOut{assignment $\x$ satisfying \eqref{eq:SubmodularGapConstraint2} and \eqref{eq:SubmodularGapConstraint3}}
    $ \x \gets 0$; $Q_0 \gets \emptyset$\;
    \For{\text{rounds} $\ell \in [n]$}{
        $ Q_\ell \gets Q_{\ell-1} \cup \{\pi(\ell)\}$\;
        \If{$\ell > \lfloor n/\sqrt{e} \rfloor$}{
            $\tilde{\x}(Q_\ell) \gets $ fractional $\beta$-approximate assignment of revealed items $Q_\ell$\;
            $j^{(\ell)} \gets $ select $j \!\in\! [m] \!\cup\! \{0\}$ with $ \P{j^{(\ell)} = j} = \begin{cases}
                \tilde{x}_{\pi(\ell),j}(Q_\ell) &\text{ if } j \in [m], \\
                1 - \sum_{j' \in [m]} \tilde{x}_{\pi(\ell),j'}(Q_\ell) &\text{ if } j = 0
            \end{cases} $\;
            $\y^{(\ell)} \gets \bigl(y^{(\ell)}_{i,j}\bigr)_{i \in [n],j\in [m]}$ with $y^{(\ell)}_{i,j} = \begin{cases}
                1 &\text{ if } i=\pi(\ell), j=j^{(\ell)}, \\
                0 &\text{ otherwise},
            \end{cases} $ for all $i\in [n],j\in [m] $\;
            \If{$ j^{(\ell)} \in [m] $ \textbf{ and } $\sum_{k \in [\ell-1]} c_{\pi(k),j^{(\ell)}} x_{\pi(k),j^{(\ell)}} \leq B_{j^{(\ell)}}$}{
            $\x \gets \x + \y^{(\ell)}$\;
            }
        }
    }
    {\bfseries return} $\x$\;
    \caption{\ninfeasibleSGAP}
    \label{alg:ro-infeasible-submodular-gap}
\end{algorithm}

For the analysis of \ninfeasibleSGAP, we fix an instance of the submodular generalized assignment problem. We denote the last round of the sampling phase by~$t \in [n]$. Later optimization of this value will yield a value of $t = \lfloor n \big/ \sqrt{e}\rfloor$. 
We let $\x$ denote the final assignment of the algorithm. For each round~$\ell \in [n]$, we let $Q_\ell$ denote the set of items that are revealed in the first~$\ell$ rounds. 
For each round~$\smash{\ell > t}$, we let $\smash{\tilde{\x}(Q_\ell)}$ denote the potentially fractional $\beta$-approximate assignment of the revealed items $Q_\ell$ and we define the randomly selected agent $\smash{j^{(\ell)}}$ and the tentative assignment $\smash{\y^{(\ell)}}$ of round $\ell$ as in \Cref{alg:ro-infeasible-submodular-gap}. Note that the algorithm sets $ j^{(\ell)} = 0 $ for the event where no agent is selected.
Further, we let $\smash{\x^{(\ell)}}$ denote the assignment of a round~$\smash{\ell > t}$, i.e., $\smash{\x^{(\ell)} = \y^{(\ell)}}$ if the tentative assignment is realized and $\smash{\x^{(\ell)} = 0}$ otherwise.
Additionally, for every round $\ell>t$, we let $\smash{\x^{(\ell+)} = \sum_{k=\ell+1}^{n} \x^{(k)}}$ and $\smash{\y^{(\ell+)} = \sum_{k=\ell+1}^{n} \y^{(k)}}$ denote the sum of all realized assignments and the sum of all tentative assignments in subsequent rounds, respectively.

A crucial part of the analysis is the independence in every round $\ell>t$ between the tentative assignment and the bound on the probability that the tentative assignment is realized.
With a submodular objective function, one would usually consider the marginal increase in every round $\ell$ with respect to the assignments in the first $\ell-1$ rounds in order to obtain the total value of the assignments. This, however, clearly depends on the random order and the random decisions of the algorithm in the first $\ell-1$ rounds while we require those dependencies exclusively for the probability bound. We avoid these dependencies, as done by \citet{KesselheimT17}. They analyze similar algorithms for monotone submodular functions in settings with a cardinality constraint, the edge-weighted bipartite matching problem, and subject to linear packing constraints. For every round $\ell$ of the assignment phase, the idea is to consider the marginal increase of the tentative assignment with respect to the tentative assignments in subsequent rounds $\ell+1,\dotsc,n$, i.e., the marginal increase $f(\y^{(\ell)} \mid \y^{(\ell+)})$.

In order to see that this leads to the claimed independence, we point out that the tentative assignment of round $\ell$ is deterministically determined if we fix $\smash{Q_\ell}$, $\smash{\pi(\ell)} \in Q_\ell$, and the agent~$\smash{j^{(\ell)}}$.
Thus, for a fixed round~$\ell \in [n]$, fixing these quantities for every $k \in \{\ell, \dotsc, n\}$ determines the tentative assignments of all rounds $k \in \{\ell, \dotsc, n\}$ and therefore it also determines the marginal increase $f(\y^{(\ell)} \mid \y^{(\ell+)})$. Hence, it does not depend on the random order of the first $\ell-1$ items and not on the selected agents in rounds $t+1,\dotsc,\ell-1$. 
In the following lemma, we obtain a bound for the probability that the tentative assignment will not be realized, which depends only on the remaining quantities, thus leading to the independence.
For this and upcoming proofs, we remark that a random permutation $\pi \colon [n] \to [n]$ can be obtained by the following procedure: We draw $\pi(n)$ uniformly at random from the set of all items $Q_n = [n]$ and then, for every round $\ell$ from $n-1$ to $1$, we draw $\pi(\ell)$ uniformly at random from $Q_{\ell} = Q_{\ell+1} \setminus \{\pi(\ell+1)\}$. This shows that the item $\pi(\ell)$ can be seen as being drawn uniformly at random from the set $Q_\ell$, while $Q_\ell$ can be seen as a subset of $[n]$ with $|Q_\ell| = \ell$ drawn uniformly at random.
Note that we slightly overload the notation by using a set $Q_{\ell}$ for the items that are revealed in the first $\ell$ rounds and also for the event that those items are revealed in a random order during those rounds. 

\begin{lemma}
    \label{lem:infeasiblegap-submodular-success-probability}
    Let $\ell \in \{ t+1,\dotsc,n \}$ and let $Q_{\ell-1} \subset [n]$ with $|Q_{\ell-1}| = \ell - 1$ be arbitrary. 
    Depending on items $\pi(\ell-1),\pi(\ell-2),\dotsc,\pi(t+1)$ drawn uniformly at random from
    \begin{align*}
    Q_{\ell-1}, Q_{\ell-2} = Q_{\ell-1} \setminus \{\pi(\ell-1)\},\dotsc,Q_{t+1} = Q_{t+2} \setminus \{\pi(t+2)\},
    \end{align*}
    respectively, and depending on the random selection of an agent in rounds $t+1,\dotsc,\ell-1$, we have for every agent~$j \in [m]$ that
    \begin{equation*}
        \BiggCP{\sum_{k = t+1}^{\ell-1} c_{\pi(k),j} \, y^{(k)}_{\pi(k),j} > B_j}{Q_{\ell-1}} \leq \sum_{k = t+1}^{\ell-1} \frac{1}{k}.
    \end{equation*}
\end{lemma}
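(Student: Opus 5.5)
We bound the probability by first controlling the expected load on agent~$j$ and then turning that into a probability bound; the cleanest route is a Markov-type argument on a shifted load. The idea is to charge each round~$k$ an expected cost of at most $B_j/k$. Fix $Q_{\ell-1}$ and a round $k \in \{t+1,\dotsc,\ell-1\}$. Conditional on $Q_k$, the item $\pi(k)$ is uniform in $Q_k$, and $j^{(k)} = j$ with probability $\tilde{x}_{\pi(k),j}(Q_k)$. Since $\tilde{\x}(Q_k)$ satisfies the budget constraint~\eqref{eq:SubmodularGapConstraint1} on the revealed items, we get
\begin{equation*}
    \bigCE{c_{\pi(k),j}\, y^{(k)}_{\pi(k),j}}{Q_k} = \frac{1}{k} \sum_{i \in Q_k} c_{i,j}\, \tilde{x}_{i,j}(Q_k) \leq \frac{B_j}{k}.
\end{equation*}
Because this holds for every $Q_k$, it also holds conditional on $Q_{\ell-1}$ after averaging over the backward-drawn $Q_k \subseteq Q_{\ell-1}$. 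Summing over $k$ bounds the expected total load by $B_j \sum_{k=t+1}^{\ell-1} 1/k$.

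Markov's inequality applied directly to the load gives only $\P{\text{load} > B_j} < \sum_k 1/k$ with a strict-versus-weak subtlety. It is also not quite right, since Markov yields $\P{\text{load} \geq B_j}\le \E{\text{load}}/B_j$, and we need the event $>B_j$. That event is contained in $\{\text{load} \geq B_j\}$, so Markov suffices as stated. The statement has exactly the Markov form, so no sharper concentration should be needed.

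The step needing care is therefore not the probability inequality but the conditioning. The algorithm draws the order backward: $Q_{\ell-1}$ is fixed, then $\pi(\ell-1)$ is uniform in $Q_{\ell-1}$, and so on. So $Q_k$ is a uniformly random $k$-subset of $Q_{\ell-1}$, and given $Q_k$ the item $\pi(k)$ is uniform in $Q_k$ and independent of $Q_{k+1},\dots$ only through $Q_k$. The agent selection in round $k$ depends only on $Q_k$ and $\pi(k)$ through $\tilde{\x}(Q_k)$.

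I will formalize the conditioning by using the tower property, $\E{\cdot \mid Q_{\ell-1}} = \E{\E{\cdot\mid Q_k, Q_{k+1},\dots,Q_{\ell-1}}\mid Q_{\ell-1}}$, and noting that the inner expectation equals the $Q_k$-conditional expression above. I also use the w.l.o.g.\ assumption that costs are at most $B_j$; it is not actually needed here beyond feasibility of $\tilde{\x}(Q_k)$. The budget feasibility of the fractional $\beta$-approximate solution (it lies in the relaxation polytope) is the only property of $\tilde{\x}$ used. Linearity over $k$ and Markov's inequality then finish the proof.
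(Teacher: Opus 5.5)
Your proposal is correct and follows the paper's proof: Markov's inequality and linearity of expectation, with each round contributing at most $B_j/k$ because $\pi(k)$ is uniform on $Q_k$ and $\tilde{\x}(Q_k)$ satisfies the budget constraint, and then the passage from conditioning on $Q_k$ to conditioning on $Q_{\ell-1}$ (your tower-property step spells out the paper's remark that ``$Q_k$ is an arbitrary subset of $Q_{\ell-1}$''). Your aside on strict versus weak inequalities is muddled but harmless, since for the load $L$ one has $\P{L > B_j} \le \P{L \ge B_j} \le \E{L}/B_j$. As the paper does, you should also carry the expectation $\mathbb{E}_{\tilde{\x}}$ over a possibly randomized computation of $\tilde{\x}(Q_k)$.
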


\begin{proof}
    Let $\ell \in \{ t+1,\dotsc,n \}$ and $j \in [m]$ be arbitrary and let $Q_{\ell-1}$ be as in the statement.
    We start by applying Markov's inequality on the left hand side of the claimed inequality. With linearity of expectation, we obtain that
    \begin{equation}
        \label{eq:lem-infeasiblegap-submodular}
        \BiggCP{\sum_{k = t+1}^{\ell-1} c_{\pi(k),j} \, y^{(k)}_{\pi(k),j} > B_j}{Q_{\ell-1}} \leq \frac{1}{B_j} \, \sum_{k = t+1}^{\ell-1} \BigCE{ c_{\pi(k),j} \, y^{(k)}_{\pi(k),j}}{Q_{\ell-1}}.
    \end{equation}
    For every $k \in \{t+1,\dotsc,\ell-1\}$, we let $Q_k \subseteq Q_{\ell-1}$ with $|Q_k| = k$ be arbitrary. Under the condition that the items contained in $Q_k$ are revealed in the first $k$ rounds, the item $\pi(k)$ can be seen as drawn uniformly at random from $Q_k$. Hence, we have that
    \begin{align*}
        \BigCE{c_{\pi(k),j} \, y^{(k)}_{\pi(k),j}}{Q_{k}} &= \sum_{i \in Q_k} \bigCP{\pi(k) = i}{Q_k} \, \BigCE{c_{\pi(k),j} \, y^{(k)}_{\pi(k),j}}{Q_{k}, \pi(k) = i} \\
        &= \frac{1}{k} \sum_{i \in Q_k} \BigCE{c_{\pi(k),j} \, y^{(k)}_{\pi(k),j}}{Q_{k}, \pi(k) = i}.
    \end{align*}
    The tentative assignment $\y^{(k)}$ is constructed such that $\smash{y^{(k)}_{i,j} = 1}$ if $ i=\pi(k) $ and $ j = j^{(k)} $ and $\smash{y^{(k)}_{i,j} = 0}$ otherwise. Under the conditions $Q_k$ and $ \pi(k) = i $, we have that
    \begin{equation*}
        \bigCP{j^{(k)} = j}{Q_k, \pi(k) = i} = \mathbb{E}_{\tilde{\x}}\bigl[\tilde{x}_{i,j}(Q_k)\bigr],
    \end{equation*}
    where $\tilde{\x}(Q_k)$ is a fractional $\beta$-approximate assignment of the items contained in $Q_k$ and the expectation $\mathbb{E}_{\tilde{\x}}$ is only taken over a potentially randomized computation of $\tilde{\x}(Q_k)$.
    Therefore, we obtain that
    \begin{align*}
        &\BigCE{c_{\pi(k),j} \, y^{(k)}_{\pi(k),j}}{Q_{k}, \pi(k) = i}\\
        &\qquad = \bigCP{j^{(k)} = j}{Q_{k}, \pi(k) = i} \, \BigCE{c_{\pi(k),j} \, y^{(k)}_{\pi(k),j}}{Q_{k}, \pi(k) = i, j^{(k)} = j} \\
        &\qquad = c_{i,j} \, \mathbb{E}_{\tilde{\x}}\bigl[\tilde{x}_{i,j}(Q_k)\bigr].
    \end{align*}
    By the feasibility of $\tilde{\x}(Q_k)$, we obtain for every $ k \in \{t+1,\dotsc,\ell-1\}$ that
    \begin{equation*}
        \BigCE{c_{\pi(k),j} \, y^{(k)}_{\pi(k),j}}{Q_{k}} = \frac{1}{k} \sum_{i \in Q_k} c_{i,j} \, \mathbb{E}_{\tilde{\x}}\bigl[\tilde{x}_{i,j}(Q_k)\bigr] \leq \frac{B_j}{k}.
    \end{equation*}
    As $Q_k$ is an arbitrary subset of $Q_{\ell-1}$, we conclude for every $k \in \{t+1,\dotsc,\ell-1\}$ that
    \begin{align*}
        \BigCE{c_{\pi(k),j} \, y^{(k)}_{\pi(k),j}}{Q_{\ell-1}} \leq \frac{B_j}{k},
    \end{align*}
    and together with equation~\eqref{eq:lem-infeasiblegap-submodular}, we obtain that
    \begin{equation*}
        \BiggCP{\sum_{k = t+1}^{\ell-1} c_{\pi(k),j} \, y^{(k)}_{\pi(k),j} > B_j}{Q_{\ell-1}} \leq \sum_{k=t+1}^{\ell-1} \frac{1}{k},
    \end{equation*}
    as claimed.
\end{proof}

We proceed with a lemma where we derive a bound for the marginal increase $f(\y^{(\ell)} \mid \y^{(\ell+)})$ of the tentative assignment in round~$\ell$ with respect to all tentative assignments of subsequent rounds.

\begin{lemma}
    \label{lem:infeasiblegap-submodular-tentative-values}
    Let $\ell \in \{ t+1,\dotsc,n \}$. 
    We have for the tentative assignments that
    \begin{equation*}
        \bigE{f\bigl(\y^{(\ell)} \mid \y^{(\ell+)}\bigr)} \geq \frac{\beta}{n} \, f(\x^*) - \frac{1}{\ell} \, \bigE{f\bigl(\y^{(\ell+)}\bigr)}.
    \end{equation*}
\end{lemma}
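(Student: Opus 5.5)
We condition on the set $Q_\ell$ of items revealed in the first $\ell$ rounds, together with everything that happens in rounds $\ell+1,\dotsc,n$. Concretely, we fix $\pi(\ell+1),\dotsc,\pi(n)$ and the agents $j^{(\ell+1)},\dotsc,j^{(n)}$. This fixes $\y^{(\ell+)}$, which we write as a deterministic $0/1$ matrix $\y$ supported on items outside $Q_\ell$. The remaining randomness is $\pi(\ell)$, which is uniform on $Q_\ell$, and the choice of $j^{(\ell)}$. Given $\pi(\ell)=i$, the agent $j^{(\ell)}$ equals $j$ with probability $\mathbb{E}_{\tilde{\x}}[\tilde{x}_{i,j}(Q_\ell)]$. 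For a fixed outcome of $\tilde{\x} = \tilde{\x}(Q_\ell)$ we therefore get
\begin{equation*}
    \BigCE{f\bigl(\y^{(\ell)} \mid \y\bigr)}{Q_\ell,\y} = \frac{1}{\ell} \sum_{i \in Q_\ell} \sum_{j \in [m]} \tilde{x}_{i,j} \, f\bigl((i,j) \mid \y\bigr).
\end{equation*}
Since $f$ is multilinear in the coordinate $(i,j)$, we have $\tilde{x}_{i,j} \, f((i,j)\mid \y) = F(\tilde{x}_{i,j}\mathbf{1}_{(i,j)} \mid \y)$. The entries $\tilde{x}_{i,j}$ live on items in $Q_\ell$ and $\y$ lives on items outside $Q_\ell$, so the supports are disjoint and $x_e > y_e$ exactly when $\tilde{x}_e > 0$. \Cref{prop:extended-submodular-inequality} applied with $\x=\tilde{\x}$ gives $F(\tilde{\x}) \leq F(\tilde{\x}\lor \y) \le f(\y) + \sum_{e} F(\tilde{x}_e\mathbf{1}_e\mid \y)$; more precisely, we apply it to $\tilde{\x}\lor\y$ versus $\y$ and use monotonicity. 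Hence
\begin{equation*}
    \BigCE{f\bigl(\y^{(\ell)} \mid \y^{(\ell+)}\bigr)}{Q_\ell,\y^{(\ell+)}} \geq \frac{1}{\ell}\Bigl(\beta \, f\bigl(\x^*(Q_\ell)\bigr) - f\bigl(\y^{(\ell+)}\bigr)\Bigr),
\end{equation*}
where we used $F(\tilde{\x}(Q_\ell)) \ge \beta f(\x^*(Q_\ell))$ and averaged over the randomness of $\tilde{\x}$.

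Next we take expectations. The term $-\frac{1}{\ell}\E{f(\y^{(\ell+)})}$ already has the required form. It remains to show $\E{f(\x^*(Q_\ell))} \geq \frac{\ell}{n} f(\x^*)$, where $Q_\ell$ is a uniformly random $\ell$-subset of $[n]$. Let $\x^*|_{Q_\ell}$ denote $\x^*$ restricted to the items in $Q_\ell$. It is feasible for $Q_\ell$, so $f(\x^*(Q_\ell)) \ge f(\x^*|_{Q_\ell})$. Each item lies in $Q_\ell$ with probability $\ell/n$, so $\x^*|_{Q_\ell}$ is a random subset of $\x^*$ with marginals $\ell/n$. For a monotone submodular normalized $f$ this gives $\E{f(\x^*|_{Q_\ell})} \ge \frac{\ell}{n} f(\x^*)$. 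One way to see this is the standard sampling lemma: order the items of $\x^*$ and telescope marginals, noting that each marginal conditioned on the earlier items is at least the marginal given all earlier items in $\x^*$. Alternatively, average over the $\binom{n}{\ell}$ subsets and use that a submodular normalized function is subadditive over a uniform cover. Multiplying by $\beta/\ell$ yields $\frac{\beta}{n}f(\x^*)$.

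The main obstacle is the conditioning in the first step. We must check that fixing $\y^{(\ell+)}$ leaves $\pi(\ell)$ uniform on $Q_\ell$, and that $\tilde{\x}(Q_\ell)$ depends only on $Q_\ell$. Both hold by the backward-sampling description of $\pi$ given before \Cref{lem:infeasiblegap-submodular-success-probability}. We must also make sure the application of \Cref{prop:extended-submodular-inequality} correctly handles the disjoint supports, so that the single-coordinate marginal terms are exactly the linear expected gains.
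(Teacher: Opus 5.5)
Your proposal is correct and follows essentially the same route as the paper. You condition on the items and agents of rounds $\ell+1,\dots,n$, use that $\pi(\ell)$ is uniform on $Q_\ell$ and multilinearity to rewrite the expected marginal as $\frac{1}{\ell}\sum_{i,j} F(\tilde{x}_{i,j}\mathbf{1}_{i,j}\mid \y^{(\ell+)})$, apply \Cref{prop:extended-submodular-inequality}, and finish with the telescoping-marginals bound $\E{f(\x^*(Q_\ell))} \geq \frac{\ell}{n} f(\x^*)$; the only cosmetic difference is that the disjoint supports let the paper apply the proposition directly to $\tilde{\x}(Q_\ell)$ and $\y^{(\ell+)}$, so your detour through $\tilde{\x}\lor\y$ and monotonicity is unnecessary but harmless.
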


\begin{proof}
    Let $\ell \in \{ t+1,\dotsc,n \}$ be arbitrary. 
    Let $\calR$ denote the set of events where both the items $\pi(\ell+1),\dotsc,\pi(n)$ and the agents $i^{(\ell+1)},\dotsc,i^{(n)}$ are fixed. Then, for an event $R \in \calR$, the tentative assignments for all rounds from $\ell+1$ to $n$ are determined and we let $\y^{(\ell+)}(R)$ denote the sum of those assignments. Further, it determines the set of items that are revealed in the first $\ell$ rounds. We denote this set by $Q_\ell(R)$.
    Consider an arbitrary event $R \in \calR$. For a simpler notation, we omit $R$ as an argument, i.e., $\y^{(\ell+)} = \y^{(\ell+)}(R)$ and $Q_\ell = Q_\ell(R)$.
    Conditioned on $R$, the item $\pi(\ell)$ can be seen as being drawn uniformly at random from $Q_\ell$. We obtain that
    \begin{align*}
        \bigCE{f\bigl(\y^{(\ell)} \;\big\vert\; \y^{(\ell+)}\bigr)}{R} &= \sum_{i \in Q_\ell} \CP{\pi(\ell) = i}{R} \, \bigCE{f\bigl(\y^{(\ell)} \;\big\vert\; \y^{(\ell+)}\bigr)}{R, \pi(\ell) = i} \\
        &= \frac{1}{\ell} \sum_{i \in Q_\ell} \bigCE{f\bigl(\y^{(\ell)} \mid \y^{(\ell+)}\bigr)}{R, \pi(\ell) = i}.
    \end{align*}
    Further, the agent $j^{(\ell)}$ is determined based on the fractional assignment $\tilde{\x}(Q_\ell)$ where agent~$j \in [m]$ is selected with probability $\P{j^{(\ell)} = j} = \tilde{x}_{\pi(\ell),j}(Q_\ell)$. Given that $\pi(\ell)=i$ and $j^{(\ell)} = j$, we have that $\y^{(\ell)} = \mathbf{1}_{i,j}$ as the tentative assignment only assigns item~$i$ to agent~$j$. We obtain that 
    \begin{align*}
        &\sum_{i \in Q_\ell} \bigCE{f\bigl(\y^{(\ell)} \mid \y^{(\ell+)}\bigr)}{R, \pi(\ell) = i} \\
        &\qquad = \sum_{i \in Q_\ell} \sum_{j \in [m]} \bigCP{j^{(\ell)} = j}{R, \pi(\ell) = i} \, \bigCE{f\bigl(\y^{(\ell)} \mid \y^{(\ell+)}\bigr)}{R, \pi(\ell) = i,j^{(\ell)} = j} \\
        &\qquad = \sum_{i \in Q_\ell} \sum_{j \in [m]} \mathbb{E}_{\tilde{\x}}\bigl[\tilde{x}_{i,j}(Q_\ell)\bigr] \, f\bigl(\mathbf{1}_{i,j} \mid \y^{(\ell+)}\bigr),
    \end{align*}
    where we use that $\y^{(\ell+)}$ and $Q_\ell$ are determined by $R$. Note that the expectation $\mathbb{E}_{\tilde{\x}}$ is only taken over a potentially randomized computation of $\tilde{\x}(Q_\ell)$.
    Since all items $i \in Q_\ell$ are not assigned in $\y^{(\ell+)}$, we have $ \mathbf{1}_{i,j} + \y^{(\ell+)} = \mathbf{1}_{i,j} \lor \y^{(\ell+)} $ for every $i \in Q_\ell$ and $ j \in [m]$. In addition, $\y^{(\ell+)}$ is a binary assignment and hence, we obtain with definition of the multilinear extension $F$ for every $i \in Q_\ell$ and $j \in [m]$ that
    \begin{align*}
        &\mathbb{E}_{\tilde{\x}}\bigl[F\bigl(\tilde{x}_{i,j}(Q_\ell) \cdot \mathbf{1}_{i,j} \mid \y^{(\ell+)}\bigr)\bigr] \\
        &\qquad= \mathbb{E}_{\tilde{\x}}\bigl[F\bigl(\tilde{x}_{i,j}(Q_\ell) \cdot \mathbf{1}_{i,j} + \y^{(\ell+)}\bigr)\bigr] - f\bigl(\y^{(\ell+)}\bigr) \\
        &\qquad= \mathbb{E}_{\tilde{\x}}\bigl[\tilde{x}_{i,j}(Q_\ell)\bigr] \, f\bigl(\mathbf{1}_{i,j} + \y^{(\ell+)}\bigr) + \bigl(1 - \mathbb{E}_{\tilde{\x}}\bigl[\tilde{x}_{i,j}(Q_\ell)\bigr] \bigr)\, f\bigl(\y^{(\ell+)}\bigr)  - f\bigl(\y^{(\ell+)}\bigr) \\
        &\qquad= \mathbb{E}_{\tilde{\x}}\bigl[\tilde{x}_{i,j}(Q_\ell)\bigr] \, f\bigl(\mathbf{1}_{i,j} \mid \y^{(\ell+)}\bigr).
    \end{align*}
    Together with \Cref{prop:extended-submodular-inequality}, we conclude that
    \begin{align*}
        \bigCE{f\bigl(\y^{(\ell)} \;\big\vert\; \y^{(\ell+)}\bigr)}{R} &= \frac{1}{\ell} \sum_{i \in Q_\ell} \sum_{j \in [m]} \mathbb{E}_{\tilde{\x}}\bigl[F\bigl(\tilde{x}_{i,j}(Q_\ell) \cdot \mathbf{1}_{i,j} \mid \y^{(\ell+)}\bigr)\bigr] \\
        &\geq \frac{1}{\ell} \Bigl( \mathbb{E}_{\tilde{\x}}\bigl[F(\tilde{\x}(Q_\ell))\bigr] - f\bigl(\y^{(\ell+)}\bigr) \Bigr),
    \end{align*}
    where we additionally used that $ F\bigl(\y^{(\ell+)}\bigr) = f\bigl(\y^{(\ell+)}\bigr) $ since $\y^{(\ell+)}$ is a binary assignment. 

    Since the previous computations hold for every $R \in \calR$, we obtain that
    \begin{equation*}
        \bigE{f\bigl(\y^{(\ell)} \mid \y^{(\ell+)}\bigr)} \geq \frac{1}{\ell} \, \E{F(\tilde{\x}(Q_\ell))} - \frac{1}{\ell} \, \bigE{f\bigl(\y^{(\ell+)}\bigr)},
    \end{equation*}
    where $Q_\ell$ with $|Q_\ell| = \ell$ is now drawn uniformly at random from $[n]$ such that each item $i \in [n]$ is contained in $Q_\ell$ with probability $\ell / n$. To bound $\E{F(\tilde{\x}(Q_\ell))} $, we recall that $ \tilde{\x}(Q_\ell) $ is a $\beta$-approximation of the optimal integral assignment on $Q_\ell$ which we denote by $\x^*(Q_\ell)$. Further, we define for every set $Q_\ell$ the assignment $\y^*(Q_\ell) = (y^*_{i,j}(Q_\ell))_{i \in [n], j \in [m] }$ such that $ y^*_{i,j}(Q_\ell) = x^*_{i,j}$ if $i \in Q_\ell$ and $ y^*_{i,j}(Q_\ell) = 0$ if $i \notin Q_\ell$ for every $i \in [n], j \in [m]$. Note that this is the projection of the optimal integral assignment to the items contained in~$Q_\ell$.
    We obtain that
    \begin{equation*}
        \E{F(\tilde{\x}(Q_\ell))} \geq \beta \, \E{f(\x^*(Q_\ell))} \geq \beta \, \E{f(\y^*(Q_\ell))}.
    \end{equation*}
    Furthermore, we have that
    \begin{align*}
        \E{f(\y^*(Q_\ell))} &= \sum_{i \in [n]} \BiggE{ f\Biggl( \sum_{j \in [m]} y^*_{i,j}(Q_\ell) \;\Bigg\vert\; \sum_{i' \in [i-1]} \sum_{j \in [m]} y^*_{i',j}(Q_\ell) \Biggr)} \\
        &\geq \sum_{i \in [n]} \BiggE{ f\Biggl( \sum_{j \in [m]} y^*_{i,j}(Q_\ell) \;\Bigg\vert\; \sum_{i' \in [i-1]} \sum_{j \in [m]} x^*_{i',j} \Biggr)} \\
        &= \sum_{i \in [n]} \P{i \in Q_\ell} \, f\Biggl( \sum_{j \in [m]} x^*_{i,j} \;\Bigg\vert\; \sum_{i' \in [i-1]} \sum_{j \in [m]} x^*_{i',j} \Biggr) \\
        &= \frac{\ell}{n} \, f(\x^*),
    \end{align*}
    where we use linearity of expectation after splitting the total value into the marginal increases for the first equality.
    The inequality holds since the marginal increases of a submodular function are monotonically decreasing, which applies since $y^*_{i,j}(Q_\ell) = 1$ implies that $x^*_{i,j} = 1$ for every $Q_\ell$ and every $i \in [n], j \in [m]$.
    For the following equation, we use that we have $\smash{y^*_{i,j}(Q_\ell) = x^*_{i,j}}$ under the condition that~$i \in Q_\ell$ and finally, we used $ \P{i \in Q_\ell} = \ell / n$ for every item~$i \in [n]$.
    Combining everything shows that
    \begin{equation*}
        \bigE{f\bigl(\y^{(\ell)} \mid \y^{(\ell+)}\bigr)} \geq \frac{\beta}{n} \, f(\x^*) - \frac{1}{\ell} \, \bigE{f\bigl(\y^{(\ell+)}\bigr)},
    \end{equation*}
    as claimed in the statement of the lemma.   
\end{proof}

With the previous two lemmas, we are now ready to bound the total value of the assignment computed by \Cref{alg:ro-infeasible-submodular-gap}.

\begin{lemma}
    \label{lem:infeasiblegap-submodular-total-value}
    Let $\x$ be the infeasible assignment computed by \Cref{alg:ro-infeasible-submodular-gap}. Then, we have that
    \begin{equation*}
        \E{f(\x)} \geq \beta \,\biggl( \frac{2}{\sqrt{e}} - 1 - o(1) \biggr)  \, f(\x^*).
    \end{equation*}
\end{lemma}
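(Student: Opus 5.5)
The plan is to (i) lower-bound $f(\x)$ by a sum of marginals of tentative assignments weighted by success probabilities, (ii) use \Cref{lem:infeasiblegap-submodular-success-probability} to decouple the two factors, and (iii) solve a small linear program in the quantities $d_\ell=\E{f(\y^{(\ell)}\mid\y^{(\ell+)})}$ using \Cref{lem:infeasiblegap-submodular-tentative-values}. Throughout write $\tau = t/n \to 1/\sqrt e$.

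\textbf{Step 1: reduce to tentative marginals.} Telescope backwards:
\[
f(\x)=\sum_{\ell>t} f\bigl(\x^{(\ell)}\mid \x^{(\ell+)}\bigr).
\]
Since $\x^{(\ell+)}\le \y^{(\ell+)}$ componentwise, submodularity gives
\[
f\bigl(\x^{(\ell)}\mid \x^{(\ell+)}\bigr)\ge f\bigl(\x^{(\ell)}\mid \y^{(\ell+)}\bigr)=\mathbf{1}[\text{round }\ell\text{ realized}]\cdot f\bigl(\y^{(\ell)}\mid \y^{(\ell+)}\bigr).
\]
Round $\ell$ is realized whenever $j^{(\ell)}\in[m]$ and $\sum_{k=t+1}^{\ell-1}c_{\pi(k),j^{(\ell)}}y^{(k)}_{\pi(k),j^{(\ell)}}\le B_{j^{(\ell)}}$, because realized costs are at most tentative costs. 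When $j^{(\ell)}=0$ the marginal is zero anyway.

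\textbf{Step 2: independence.} Condition on $R$ (items and agents of rounds $\ell+1,\dots,n$), on $\pi(\ell)=i$, and on $j^{(\ell)}=j$. This fixes $f(\y^{(\ell)}\mid\y^{(\ell+)})$ and also $Q_{\ell-1}=Q_\ell\setminus\{i\}$. The remaining randomness (order inside $Q_{\ell-1}$, agent choices in rounds $t+1,\dots,\ell-1$) is exactly that of \Cref{lem:infeasiblegap-submodular-success-probability}. Hence the success probability is at least
\[
w_\ell := 1-\sum_{k=t+1}^{\ell-1}\tfrac1k \ge 1-\ln(\ell/t)-o(1).
\]
This is positive for all $\ell\le n$, since $\ln(n/t)\to 1/2$. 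Taking expectations gives $\E{f(\x)}\ge \sum_{\ell>t} w_\ell\, d_\ell$.

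\textbf{Step 3: the LP bound (main obstacle).} Define $g_{\ell+1}=\E{f(\y^{(\ell+)})}$. Telescoping gives $g_{\ell+1}=\sum_{k>\ell} d_k$. The constraints are $d_\ell\ge0$ and, by \Cref{lem:infeasiblegap-submodular-tentative-values},
\[
d_\ell+\tfrac1\ell\sum_{k>\ell}d_k\ \ge\ \tfrac{\beta}{n}f(\x^*).
\]
I would certify a lower bound on $\sum w_\ell d_\ell$ by weak duality. The task is to find multipliers $\lambda_\ell\ge0$ with
\[
\lambda_k+\sum_{t<\ell<k}\lambda_\ell/\ell\le w_k \quad\text{for all } k.
\]
Summing the constraints with these weights then yields
\[
\sum w_\ell d_\ell\ \ge\ \tfrac{\beta}{n}f(\x^*)\sum_\ell\lambda_\ell.
\]
In the continuum $s=\ell/n$ with equality, the multiplier condition becomes
\[
\lambda(s)+\int_\tau^s\frac{\lambda(u)}{u}\,du=1-\ln(s/\tau).
\]
Differentiating gives $(s\lambda)'=-1$, and with $\lambda(\tau)=1$ this yields $\lambda(s)=2\tau/s-1$. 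This is nonnegative on $[\tau,1]$ because $2\tau>1$.

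Using $\ln\tau=-\tfrac12$, the resulting bound is
\[
\beta\int_\tau^1\Bigl(\tfrac{2\tau}{s}-1\Bigr)\,ds=\beta(-2\tau\ln\tau-1+\tau)=\beta(2\tau-1)=\beta\Bigl(\tfrac{2}{\sqrt e}-1\Bigr).
\]

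For the discrete argument I would set $\lambda_\ell=2t/\ell-1$ (or a slightly shrunk version) and check the dual inequality up to $O(1/t)$ errors. The check uses $\sum_{t<\ell<k}(2t/\ell-1)/\ell\approx 2-2t/k-\ln(k/t)$, which is absorbed into the $o(1)$ term. Finally, $\tfrac1n\sum_{\ell>t}(2t/\ell-1)\to 2\tau\ln(1/\tau)-(1-\tau)=2/\sqrt e-1$, which gives the claim.

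The delicate point is keeping the discrete dual feasible with only $o(1)$ loss. Justifying independence in Step 2 is routine given how \Cref{lem:infeasiblegap-submodular-success-probability} is phrased.
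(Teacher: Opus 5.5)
Your proposal is correct and is essentially the paper's proof. The telescoping against $\y^{(\ell+)}$ and the decoupling via \Cref{lem:infeasiblegap-submodular-success-probability} are the same. The paper's induction over $h$ is exactly your weak-duality certificate: it peels off the inequality of \Cref{lem:infeasiblegap-submodular-tentative-values} for round $h+1$ with multiplier $2t/h-1$, i.e.\ $\lambda_\ell = 2t/(\ell-1)-1$, for which the dual constraints hold with equality.

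Your $\lambda_\ell = 2t/\ell-1$ is also exactly feasible, because $\sum_{t<\ell<k} 2t/\ell^2 \le \sum_{t<\ell<k} 2t/(\ell(\ell-1)) = 2 - 2t/(k-1)$. So the ``delicate'' $O(1/t)$ step is unnecessary. If there were additive violations, they could not be ``absorbed into the $o(1)$ term'' directly, since you have no upper bound on $\E{f(\y^{(t+)})}$. They would have to be handled by your shrinking option, scaling the $\lambda_\ell$ down by $1-O(1/t)$ using $w_k \ge \tfrac12 - o(1)$.
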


\begin{proof}
    Let $t \geq n/2$ and let $\x$ be as in the statement.
    By splitting the value of the assignment into the marginal increases of every round with respect to subsequent rounds, we obtain that
    \begin{equation*}
        \E{f(\x)} = \sum_{\ell=t+1}^n \bigE{f\bigl(\x^{(\ell)} \mid \x^{(\ell+)} \bigr)} \geq \sum_{\ell=t+1}^n \bigE{f\bigl(\x^{(\ell)} \mid \y^{(\ell+)} \bigr)},
    \end{equation*}
    where the inequality follows from submodularity of $f$ since $ \y^{(\ell+)} \geq \x^{(\ell+)} $. 
    For the right hand side, we observe for every round $ \ell > t$ that $ \x^{(\ell)} = \y^{(\ell)} $ if the tentative assignment is realized and $\x^{(\ell)} = 0 $ otherwise. Hence, $ f(\x^{(\ell)} \mid \y^{(\ell+)} ) = f(\y^{(\ell)} \mid \y^{(\ell+)} )$ if the tentative assignment is realized and $f(\x^{(\ell)} \mid \y^{(\ell+)}) = 0$ otherwise.  
    Recall that the tentative assignment of round $\ell$ depends on $Q_\ell$, $\pi(\ell)$ and $ j^{(\ell)} $. Then, $ f(\y^{(\ell)} \mid \y^{(\ell+)}) $ is fixed if we determine the items $\pi(n),\dotsc,\pi(\ell)$ and the agents $ j^{(n)},\dotsc,j^{(\ell)}$. In particular, this defines the set $Q_{\ell-1}$ of items revealed in the first $\ell-1$ rounds and the agent $ j^{(\ell)} $ that gets the item in round $\ell$.
    \Cref{lem:infeasiblegap-submodular-success-probability} shows for every set $Q_{\ell-1}$ and every agent $j \in [m]$ that the total cost of the assignments from the first $\ell-1$ rounds exceeds the budget of agent~$j$ with a probability of at most $\smash{\sum_{k = t+1}^{\ell-1} \frac{1}{k}}$.
    Thus, independent of the marginal increase in round $\ell$, the tentative assignment gets realized with a probability of at least $\smash{1 - \sum_{k=t+1}^{\ell-1} \frac{1}{k}}$. Hence, we obtain that
    \begin{equation*}
        \sum_{\ell=t+1}^n \bigE{f\bigl(\x^{(\ell)} \mid \y^{(\ell+)}\bigr)} \geq \sum_{\ell=t+1}^{n} \Biggl(1 - \sum_{k = t+1}^{\ell-1} \frac{1}{k}\Biggr) \, \bigE{f\bigl(\y^{(\ell)} \mid \y^{(\ell+)}\bigr)}.
    \end{equation*}
    Note that the bound of \Cref{lem:infeasiblegap-submodular-tentative-values} for $ \E{f(\y^{(\ell)} \mid \y^{(\ell+)} )} $ depends on $\E{f(\y^{(\ell+)})}$ for every $\ell \in \{t+1,\dotsc,n\}$. We can split the latter into the sum of marginals $\E{f(\y^{(k)} \mid \y^{(k+)} )}$ for all $k > \ell$. Therefore, we first apply \Cref{lem:infeasiblegap-submodular-tentative-values} for $\ell=t+1$, simplify the remaining marginals, apply the lemma for the next round, and so on. We claim that this leads to 
    \begin{equation}
        \label{eq:lem-infeasiblegap-submodular-total-value}
        \sum_{\ell=t+1}^{n} \Biggl(1 - \sum_{k = t+1}^{\ell-1} \frac{1}{k}\Biggr) \, \bigE{f\bigl(\y^{(\ell)} \mid \y^{(\ell+)}\bigr)} \geq \frac{\beta}{n} \sum_{\ell=t+1}^{n} \Biggl( \frac{2t}{\ell-1} - 1 \Biggr) \, f(\x^*).
    \end{equation}
    In order to prove this inequality, we show for every $h \in \{t,\dotsc,n\}$ that
    \begin{multline*}
        \sum_{\ell=t+1}^{n} \Biggl(1 - \sum_{k = t+1}^{\ell-1} \frac{1}{k}\Biggr) \, \bigE{f\bigl(\y^{(\ell)} \mid \y^{(\ell+)}\bigr)} \\
        \geq \sum_{\ell=h+1}^n \Biggl( \frac{2t}{h} - 1 - \sum_{k=h+1}^{\ell-1} \frac{1}{k} \Biggr) \, \bigE{f\bigl(\y^{(\ell)} \mid \y^{(\ell+)}\bigr)} + \frac{\beta}{n} \sum_{\ell=t+1}^{h} \Biggl( \frac{2t}{\ell-1} - 1 \Biggr) \, f(\x^*).
    \end{multline*}
    For $h = n$ this implies the claimed inequality. Proving this by induction, we first observe that the right hand side is equal to the left hand side for $h = t$. Assuming that this inequality holds for a fixed $h \in \{t,\dotsc,n-1\}$, it remains to show that
    \begin{multline*}
        \sum_{\ell=h+1}^n \Biggl( \frac{2t}{h} - 1 - \sum_{k=h+1}^{\ell-1} \frac{1}{k} \Biggr) \, \bigE{f\bigl(\y^{(\ell)} \mid \y^{(\ell+)} \bigr)} + \frac{\beta}{n} \sum_{\ell=t+1}^{h} \Biggl( \frac{2t}{\ell-1} - 1 \Biggr) \, f(\x^*) \\
        \geq \sum_{\ell=h+2}^n \Biggl( \frac{2t}{h+1} - 1 - \sum_{k=h+2}^{\ell-1} \frac{1}{k} \Biggr) \, \bigE{f\bigl(\y^{(\ell)} \mid \y^{(\ell+)} \bigr)} + \frac{\beta}{n} \sum_{\ell=t+1}^{h+1} \Biggl( \frac{2t}{\ell-1} - 1 \Biggr) \, f(\x^*).
    \end{multline*}
    To apply \Cref{lem:infeasiblegap-submodular-tentative-values} for $\ell=h+1$ on the left hand side of this inequality, we first check that the factor $\smash{\frac{2t}{h} - 1 - \sum_{k=h+1}^{h} \frac{1}{k} = \frac{2t}{h} - 1}$ is non-negative. This is the case by our assumption that $t \geq n/2$, which implies $2t \geq n \geq h$. After applying the lemma, we obtain
    {\allowdisplaybreaks
    \begin{align*}
        &\sum_{\ell=h+1}^n \Biggl( \frac{2t}{h} - 1 - \sum_{k=h+1}^{\ell-1} \frac{1}{k} \Biggr) \, \bigE{f\bigl(\y^{(\ell)} \mid \y^{(\ell+)} \bigr)} + \frac{\beta}{n} \sum_{\ell=t+1}^{h} \Biggl( \frac{2t}{\ell-1} - 1 \Biggr) \, f(\x^*) \\
            &\quad\geq \sum_{\ell=h+2}^n \Biggl( \frac{2t}{h} - 1 - \sum_{k=h+1}^{\ell-1} \frac{1}{k} \Biggr) \, \bigE{f\bigl(\y^{(\ell)} \mid \y^{(\ell+)}\bigr)} + \frac{\beta}{n} \sum_{\ell=t+1}^{h} \Biggl( \frac{2t}{\ell-1} - 1 \Biggr) \, f(\x^*) \\
            &\qquad + \Biggl( \frac{2t}{h} - 1 \Biggr) \, \Biggl(\frac{\beta}{n} \, f(\x^*) - \frac{1}{h+1} \, \bigE{f\bigl(\y^{((h+1)+)}\bigr)}\Biggr) \\
            &\quad= \sum_{\ell=h+2}^n \Biggl( \frac{2t}{h} \biggl(1 - \frac{1}{h+1}\biggr) + \frac{1}{h+1} - 1 - \sum_{k=h+1}^{\ell-1} \frac{1}{k} \Biggr) \, \bigE{f\bigl(\y^{(\ell)} \mid \y^{(\ell+)}\bigr)}  + \frac{\beta}{n} \sum_{\ell=t+1}^{h+1} \Biggl( \frac{2t}{\ell-1} - 1 \Biggr) \, f(\x^*) \\
            &\quad = \sum_{\ell=h+2}^n \Biggl( \frac{2t}{h+1} - 1 - \sum_{k=h+2}^{\ell-1} \frac{1}{k} \Biggr) \, \bigE{f\bigl(\y^{(\ell)} \mid \y^{(\ell+)}\bigr)} + \frac{\beta}{n} \sum_{\ell=t+1}^{h+1} \Biggl( \frac{2t}{\ell-1} - 1 \Biggr) \, f(\x^*),
    \end{align*}
    }%
    where we used for the first equation that $ \E{f(\y^{((h+1)+)})} = \sum_{\ell=h+2}^n \E{f(\y^{(\ell)} \mid \y^{(\ell+)} )} $.
    We conclude that equation~\eqref{eq:lem-infeasiblegap-submodular-total-value} holds and that we have
    \begin{equation*}
        \E{f(\x)} \geq \frac{\beta}{n} \sum_{\ell=t+1}^{n} \Biggl( \frac{2t}{\ell-1} - 1 \Biggr) \, f(\x^*).
    \end{equation*}
    Since we have that
    \begin{equation*}
        \sum_{\ell=t+1}^{n} \frac{1}{\ell-1} = \sum_{\ell=t}^{n-1} \frac{1}{\ell} \geq \int_{\ell=t}^n \frac{1}{\ell} \, \d\ell = \ln\biggl(\frac{n}{t}\biggr),
    \end{equation*}
    we obtain
    \begin{equation*}
        \E{f(\x)} \geq \beta \, \biggl(\frac{2t}{n} \, \ln\biggl(\frac{n}{t}\biggr) - \frac{n-t}{n} \biggr) \, f(\x^*) = \beta \, \biggl(\frac{t}{n} \, \biggl(2 \, \ln\biggl(\frac{n}{t}\biggr) + 1 \biggr) - 1 \biggr) \, f(\x^*).
    \end{equation*}
    Optimization over $\frac{t}{n} \in \bigl[\frac{1}{2},1\bigr]$ yields $\frac{t}{n} = \frac{1}{\sqrt{e}}$ and
    \begin{equation*}
        \E{f(\x)} \geq \beta \, \biggl(\frac{2}{\sqrt{e}} - 1 \biggr) \, f(\x^*).
    \end{equation*}
    With $t = \lfloor n / \sqrt{e} \rfloor$, we obtain the statement of the lemma.
\end{proof}

\begin{algorithm}[t]
    \SetArgSty{textnormal}
    \KwIn{submodular generalized assignment instance $ (f,\c,\B)$, random permutation $\pi$}
    \vspace{-.04cm}
    \KwOut{assignment $\x$ satisfying \eqref{eq:SubmodularGapConstraint1}, \eqref{eq:SubmodularGapConstraint2}, and \eqref{eq:SubmodularGapConstraint3}}
    $X \gets \text{Bernoulli}(1/2)$\;
    $\x \gets 0$; $Q_0 \gets \emptyset$\;
    $\y \gets 0$\;
    \For{rounds $\ell \in [n]$}{
        $ Q_\ell \gets Q_{\ell-1} \cup \{\pi(\ell)\}$\;
        \If{$\ell > \lfloor n/\sqrt{e} \rfloor$}{
            $\tilde{\x}(Q_\ell) \gets $ fractional $\beta$-approximate assignment of revealed items $Q_\ell$\;
            $j^{(\ell)} \gets $ select $j \!\in\! [m] \!\cup\! \{0\}$ with $ \P{j^{(\ell)} \!=\! j} \!=\! \begin{cases}
                \tilde{x}_{\pi(\ell),j}(Q_\ell) &\!\!\text{if } j \in [m] \\
                1 \!-\! \sum_{j' \!\in\! [m]} \tilde{x}_{\pi(\ell),j'} (Q_\ell) &\!\!\text{if } j \!=\! 0
            \end{cases} $\;
            $\y^{(\ell)} \gets \bigl(y^{(\ell)}_{i,j}\bigr)_{i \in [n],j\in [m]}$ with $y^{(\ell)}_{i,j} = \begin{cases}
                1 &\text{ if } i = \pi(\ell), j=j^{(\ell)}, \\
                0 &\text{ otherwise},
            \end{cases} $ for all $ i \in [n], j\in [m] $\;
            \If{$ j^{(\ell)} \in [m] $ \textbf{ and } $\sum_{k \in [\ell-1]} c_{\pi(k),j^{(\ell)}} ( x_{\pi(k),j^{(\ell)}} + y_{\pi(k),j^{(\ell)}}) \leq B_{j^{(\ell)}}$}{
                \eIf{$X = 1$}{
                    \eIf{$ c_{\pi(\ell),j^{(\ell)}} + \sum_{k \in [\ell-1]} c_{\pi(k),j^{(\ell)}} x_{\pi(k),j^{(\ell)}} \leq B_{j^{(\ell)}} $}{
                        $\x \gets \x + \y^{(\ell)}$\;
                    }{
                        $\y \gets \y + \y^{(\ell)}$\;
                    }
                }{
                    \eIf{$ c_{\pi(\ell),j^{(\ell)}} + \sum_{k \in [\ell-1]} c_{\pi(k),j^{(\ell)}} y_{\pi(k),j^{(\ell)}} \leq B_{j^{(\ell)}} $}{
                        $\y \gets \y + \y^{(\ell)}$\;
                    }{
                        $\x \gets \x + \y^{(\ell)}$\;
                    }
                }
            }
        }
    }
    {\bfseries return} $\x$\;
    \caption{\nfeasibleSGAP}
    \label{alg:ro-feasible-submodular-gap}
\end{algorithm}

Now we transform the infeasible assignment of \ninfeasibleSGAP\ into two feasible assignments.
As in \citet{KlimmK25}, the first assignment contains all the assignments of \ninfeasibleSGAP\ that can be made without violation of a budget constraint and the second assignment contains all the assignments that violate a budget constraint in \ninfeasibleSGAP.
We formalize this procedure in \Cref{alg:ro-feasible-submodular-gap} and refer to the algorithm as \nfeasibleSGAP. As we lose an additional factor of $1/2$ by randomizing over both assignments, we obtain the result of \Cref{theo:general-beta-approximation}.

\begin{proof}[Proof of \Cref{theo:general-beta-approximation}]
    We claim the result for \nfeasibleSGAP. Consider an arbitrary instance of the submodular generalized assignment problem. As in \Cref{alg:ro-feasible-submodular-gap}, we let $\x$ denote the assignment of \nfeasibleSGAP\ and we let $\y$ denote the auxiliary assignment. Further, let $\hat{\x}$ denote the infeasible assignment returned by \ninfeasibleSGAP. If we consider a fixed permutation $\pi \in \Pi$, fixed agents $j^{(\ell)}$ and fixed assignments $ \tilde{\x}(Q_\ell)$ for every round $\ell \in [n]$ that belongs to the assignment phase, and a fixed value for the Bernoulli distributed random variable $X$, then we have that $\x + \y = \hat{\x}$. Due to the submodularity of $f$, this implies that $f(\x) + f(\y) \geq f(\hat{\x})$. Now taking the expectation over the random order, the randomized selections of the agents, and the potentially randomized assignments $\tilde{\x}$ of every round gives $ \E{f(\x)} + \E{f(\y)} \geq \E{f(\hat{\x})}$. Taking the Bernoulli distributed random variable $X$ into account implies that $ \E{f(\x)} = \E{f(\y)} $ since we have by the construction of the algorithm that $\CE{f(\x)}{X=1} = \CE{f(\y)}{X=0}$ and $\CE{f(\x)}{X=0} = \CE{f(\y)}{X=1}$.
    We conclude with \Cref{lem:infeasiblegap-submodular-total-value} that
    \begin{align*}
        \E{f(\x)} = \frac{1}{2} \bigl( \E{f(\x)} + \E{f(\y)} \bigr) \geq \frac{1}{2} \, \E{f(\hat{\x})} &\geq \beta \biggl( \frac{1}{\sqrt{e}} - \frac{1}{2} - o(1) \biggr)  \, f(\x^*),
    \end{align*}
    This concludes the proof of the theorem.
\end{proof}

A standard large markets assumption is that each assignment contributes only a negligible fraction to the optimal solution, i.e., $m\,f(\mathbf{1}_{i,j}) / f(\x^*) \in o_n(1)$ for all items~$i$ and agents~$j$.
Under this assumption, we can afford to lose the~$m$ items last assigned to each agent that potentially make the assignment infeasible, and run only the part of the algorithm that assigns all items as long as the budget constraints of each agent are not violated.
Compared to the competitive ratios of \Cref{coro:polynomial-approximation}, we gain a factor of $2$ and obtain the following results.

\begin{corollary}
    \label{coro:large-markets}
    For the submodular generalized assignment problem in the random-order model and large markets, the following holds:
\begin{enumerate}
    \item There exists a polynomial $\alpha$-competitive randomized algorithm with
    $\alpha = \bigl( 1 \!-\! \frac{1}{e} \bigr) \bigl( \frac{2}{\sqrt{e}} \!-\! 1 \bigr) \approx \frac{1}{7.42}$.
    \item There exists an $\alpha$-competitive randomized algorithm  with
    $\alpha = \bigl( \frac{2}{\sqrt{e}} \!-\! 1 \bigr) \approx \frac{1}{4.69}$.
\end{enumerate}
\end{corollary}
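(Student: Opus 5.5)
The plan is to run \ninfeasibleSGAP\ (\Cref{alg:ro-infeasible-submodular-gap}) unchanged, with $t=\lfloor n/\sqrt{e}\rfloor$, and then delete, for every agent~$j$, the item whose assignment first exceeds $B_j$. By construction of \ninfeasibleSGAP, an item $\pi(\ell)$ is realized for agent~$j$ only if the cost already assigned to $j$ in rounds before $\ell$ is at most $B_j$. Hence for each agent at most one realized item pushes the total cost above $B_j$, and after that no further item goes to $j$. Let $\hat{\x}$ be the output of \ninfeasibleSGAP\ and $\x'$ the assignment obtained by removing this at most one item per agent. Then $\x'$ satisfies \eqref{eq:SubmodularGapConstraint1}--\eqref{eq:SubmodularGapConstraint3}. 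The deletion can be done online: when the tentative item would violate the budget, we simply do not assign it, and we also block the agent afterwards, exactly as in the $X=1$ branch of \nfeasibleSGAP.

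For the value, let $D$ be the set of at most $m$ deleted pairs $(i,j)$. Submodularity and monotonicity give
\begin{equation*}
f(\hat{\x}) \le f(\x') + \sum_{(i,j)\in D} f(\mathbf{1}_{i,j}) \le f(\x') + m\max_{i,j} f(\mathbf{1}_{i,j}).
\end{equation*}
By the large markets assumption, $m\max_{i,j} f(\mathbf{1}_{i,j}) \le o(1)\, f(\x^*)$. Taking expectations and applying \Cref{lem:infeasiblegap-submodular-total-value} yields
\begin{equation*}
\E{f(\x')} \ge \beta\Bigl(\tfrac{2}{\sqrt e}-1-o(1)\Bigr) f(\x^*).
\end{equation*}

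One subtlety needs care. In the modified online rule, once an agent is blocked its later tentative items are discarded, whereas in $\hat{\x}$ they would also be discarded, since the budget already exceeds $B_j$ at that point. So the realized sets agree except on $D$, and the coupling is exact. I expect this coupling to be the only real obstacle, because the rest is a direct invocation of the lemma.

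Finally, plug in the two values of $\beta$. Using the continuous greedy algorithm of \Cref{lem:continuous-greedy}, with $\beta=1-\frac1e$ in polynomial time, gives item~1. Solving each offline subproblem exactly, with $\beta=1$, gives item~2.
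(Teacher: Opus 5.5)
Your proposal is correct and follows the paper's argument: run only the $X=1$ branch of \nfeasibleSGAP\ (equivalently, \ninfeasibleSGAP\ with the at most one budget-violating item per agent dropped and that agent blocked). Submodularity bounds the loss of these at most $m$ pairs by $m\max_{i,j} f(\mathbf{1}_{i,j}) = o(1)\,f(\x^*)$, and \Cref{lem:infeasiblegap-submodular-total-value} with $\beta = 1-\frac{1}{e}$ or $\beta = 1$ then gives both items. Your explicit coupling check and the subadditivity step only spell out details the paper leaves implicit.
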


\section{Submodular Secretary Matching}
\label{sec:bipartite-matching}

We proceed in this section with the special case of the submodular generalized assignment problem where all costs and budgets are equal to $1$. This problem corresponds to the submodular secretary matching problem where the weight of a set of edges is given by a monotone submodular function. 
Formally, we have given a bipartite graph $G = (U \cup V, E)$ with $|U| = n$ and a normalized monotone submodular function $f \colon 2^E \to \R$. Note that this problem also contains the problem of maximizing a monotone submodular function subject to a transversal matroid.
In the random-order model, an algorithm initially knows the cardinality of $U$ and the vertices from $V$. Then, the vertices of~$U$ are revealed to the algorithm in a random order given by~$\pi \in \Pi$. 
In every round $\ell \in [n]$, a vertex~$\pi(\ell) \in U$ is revealed, and the algorithm observes its incident edges and may query the function~$f$ for sets $A \subseteq \{ (u,v) \in \{\pi(1),\dots,\pi(\ell) \} \times V \}$.
The algorithm then has to decide immediately and irrevocably whether to select an edge that matches $\pi(\ell)$ to a vertex in $V$, or to leave the vertex $\pi(\ell)$ unmatched. 


We approach the problem similarly as in \Cref{sec:submodular-assignment}. Compared to the submodular generalized assignment problem, however, this special case does not require infeasible assignments. Therefore, we do not need to split an algorithm into two feasible algorithms, which already improves the competitiveness by a factor of $2$ to $ \smash{\beta \bigl(\frac{2}{\sqrt{e}} - 1\bigr) \approx \frac{\beta}{4.69}}$.
We show that the competitiveness improves even more, as we obtain a better bound on the probability for the realization of a tentative assignment. This then leads to the following result:

\begin{theorem}
    \label{theo:bipartite-matching}
    There exists a $\alpha$-competitive randomized algorithm for the submodular secretary matching problem in the random-order model with
    \begin{equation*}
        \alpha = \beta \bigl( \sqrt{2}-1 \bigr) e^{\sqrt{2}-2} \approx \frac{\beta}{4.34},
    \end{equation*}
where $\beta$ is the approximation for computing an offline optimal solution (or the multilinear extension thereof).
\end{theorem}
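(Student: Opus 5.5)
The plan is to run \ninfeasibleSGAP{} unchanged on the matching instance (all $c_{i,j}=1$, $B_j=1$), with the sampling threshold $t$ left free until the end. With unit costs and budgets, the test in \Cref{alg:ro-infeasible-submodular-gap} accepts the tentative assignment of round~$\ell$ exactly when agent $j^{(\ell)}$ has received no item in rounds $t+1,\dots,\ell-1$. So the output is always a feasible matching and no coin flip is needed. As in \Cref{lem:infeasiblegap-submodular-total-value}, I would write
\begin{equation*}
\E{f(\x)} \ge \sum_{\ell>t}\bigE{f\bigl(\x^{(\ell)} \mid \y^{(\ell+)}\bigr)},
\end{equation*}
and use the same independence argument: the marginal $f(\y^{(\ell)}\mid \y^{(\ell+)})$ is determined by the backward draws $\pi(n),\dots,\pi(\ell)$ and the agents $j^{(n)},\dots,j^{(\ell)}$. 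The acceptance event depends only on the rounds before $\ell$ given $Q_{\ell-1}$.

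The improvement over \Cref{lem:infeasiblegap-submodular-success-probability} is to replace the union bound by a product bound. For a fixed $Q_{\ell-1}$ and agent $j$, I would reveal rounds backward from $k=\ell-1$ down to $k=t+1$. Conditioned on everything revealed so far, i.e.\ on $Q_k$ and the choices in rounds $k+1,\dots,\ell-1$, round $k$ draws $\pi(k)$ uniformly from $Q_k$ and then chooses $j$ with probability $\tilde x_{\pi(k),j}(Q_k)$. Hence
\begin{equation*}
\CP{j^{(k)}=j}{\text{history}} \;=\; \frac1k\sum_{i\in Q_k}\mathbb{E}_{\tilde\x}\bigl[\tilde x_{i,j}(Q_k)\bigr] \;\le\; \frac1k ,
\end{equation*}
using the constraint $\sum_i \tilde x_{i,j}\le 1$. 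Chaining these conditional bounds gives
\begin{equation*}
\P{j \text{ is never tentatively chosen in rounds } t+1,\dots,\ell-1} \;\ge\; \prod_{k=t+1}^{\ell-1}\Bigl(1-\frac1k\Bigr) \;=\; \frac{t}{\ell-1}.
\end{equation*}
This is stronger than what is needed, since it controls tentative choices and not merely realized ones. Therefore
\begin{equation*}
\E{f(\x)}\ge \sum_{\ell>t}\frac{t}{\ell-1}\,\bigE{f\bigl(\y^{(\ell)}\mid \y^{(\ell+)}\bigr)}.
\end{equation*}

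Next I would peel off terms with \Cref{lem:infeasiblegap-submodular-tentative-values}, exactly as in the induction of \Cref{lem:infeasiblegap-submodular-total-value}. Applying the lemma at round $h+1$ with current coefficient $a_{h+1}$ adds $a_{h+1}\beta f(\x^*)/n$ to the bound. It also subtracts $a_{h+1}/(h+1)$ from the coefficient of every later marginal. So the coefficients evolve by the recursion
\begin{equation*}
a_{\ell} \;=\; \frac{t}{\ell-1}-\sum_{h=t+1}^{\ell-1}\frac{a_h}{h},
\end{equation*}
and the final bound is $\E{f(\x)}\ge \frac{\beta}{n}\sum_{\ell} a_\ell\, f(\x^*)$. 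In the continuous limit, with $s=\ell/n$ and $\tau=t/n$, this recursion becomes $a'(s)=-\tau/s^2-a(s)/s$ with $a(\tau)=1$. Equivalently $(s\,a)'=-\tau/s$, which yields
\begin{equation*}
a(s)=\frac{\tau}{s}\Bigl(1-\ln\frac{s}{\tau}\Bigr).
\end{equation*}
I would then prove the discrete analogue by induction as in \Cref{lem:infeasiblegap-submodular-total-value}, with an $o(1)$ error. This is the main obstacle: the discrete recursion has to be controlled carefully. It also requires checking that each coefficient is nonnegative when the lemma is applied, which needs $s\le e\tau$ for all $s\le 1$, i.e.\ $\tau\ge 1/e$.

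Integrating gives the ratio
\begin{equation*}
\beta\int_\tau^1 a(s)\,\d s \;=\; \beta\,\tau\Bigl(L-\tfrac{L^2}{2}\Bigr), \qquad L=\ln(1/\tau).
\end{equation*}
Maximizing over $L$ gives the first-order condition $L^2/2-2L+1=0$, so $L=2-\sqrt2$ and $\tau=e^{\sqrt2-2}\approx0.556\ge 1/e$. At this optimum the value equals $\beta\,e^{\sqrt2-2}(1-L)=\beta(\sqrt2-1)e^{\sqrt2-2}$, as claimed. I would choose $t=\lfloor e^{\sqrt2-2}n\rfloor$ in the algorithm.
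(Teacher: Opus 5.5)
You cannot run \ninfeasibleSGAP{} unchanged. Its acceptance test $\sum_{k\in[\ell-1]} c_{\pi(k),j^{(\ell)}}\,x_{\pi(k),j^{(\ell)}}\le B_{j^{(\ell)}}$ only checks that the budget is not already exceeded \emph{before} the current item is added. That is exactly what lets it overshoot by one item per agent. With unit costs and budgets it accepts whenever $j^{(\ell)}$ currently holds at most one item, so an agent can receive two items and the output need not be a matching. Your claim that it accepts ``exactly when agent $j^{(\ell)}$ has received no item'' is therefore false, and so is the feasibility claim.

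The repair is to accept only if $j^{(\ell)}$ is still unmatched, which is \textsc{SubMatching} (\Cref{alg:ro-bipartite-matching}) with $t=\lfloor e^{\sqrt2-2}n\rfloor$. Nothing else in your argument changes. Your bound on the acceptance probability uses only the event that $j^{(\ell)}$ was never tentatively chosen in rounds $t+1,\dots,\ell-1$, and that event implies acceptance under the stricter rule as well.

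With this repair your proof is the paper's. Your product bound is \Cref{lem:bipartite-matching-success-probability}, and your backward chaining of conditional bounds makes its ``independence'' step precise. The peeling with \Cref{lem:infeasiblegap-submodular-tentative-values} is the induction of \Cref{lem:appendix-induction-inequality}, and the optimization is \Cref{lem:appendix-optimization}.

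What you call the main obstacle is not one. Your discrete recursion is solved exactly by $a_\ell=\frac{t}{\ell-1}\bigl(1-\sum_{k=t+1}^{\ell-1}\frac{1}{k-1}\bigr)$. To check this, note that $b_\ell=\frac{t}{\ell-1}-a_\ell$ satisfies $b_{t+1}=0$ and $b_{\ell+1}-b_\ell=a_\ell/\ell$. These are precisely the coefficients the paper's induction produces, and they are the discrete form of your $a(s)$. So an $o(1)$ error arises only when harmonic sums are replaced by logarithms. Nonnegativity of the coefficients amounts to $\sum_{k=t}^{h-1}\frac1k\le 1$, which the paper obtains from $t\ge n/2$.
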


Analogously to \Cref{coro:polynomial-approximation}, we can use the $\smash{\bigl(1-\frac{1}{e}\bigr)}$-approximation for the multilinear extension of \citet{CalinescuCPV11} to obtain a polynomial-time algorithm, and solve the offline problem optimally when super-polynomial runtime is allowed.
This yields the following immediate corollary.

\begin{corollary}
    \label{coro:polynomial-approximation-matching}
    The submodular secretary matching problem in the random-order model admits
\begin{enumerate}
    \item a polynomial $\alpha$-competitive randomized algorithm with
    $\alpha = \bigl( 1 \!-\! \frac{1}{e} \bigr) \bigl( \sqrt{2} \!-\! 1 \bigr)e^{\sqrt{2} - 2} \approx \frac{1}{6.86}$, and
    \item an $\alpha$-competitive randomized algorithm  with
    $\alpha = \bigl( \sqrt{2} \!-\! 1 \bigr)e^{\sqrt{2} - 2} \approx \frac{1}{4.34}$.
\end{enumerate}
    
\end{corollary}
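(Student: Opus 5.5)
The plan is to run the algorithm of \Cref{alg:ro-infeasible-submodular-gap} with all costs and budgets equal to~$1$, and with a sampling threshold $t=\lfloor \tau n\rfloor$ that we will optimize later. With unit budgets, a tentative assignment $\y^{(\ell)}$ of $\pi(\ell)$ to $j=j^{(\ell)}$ is realized exactly when $j$ has not been matched before, so the output is automatically feasible and no splitting via a coin flip is needed. Two ingredients must change relative to the proof of \Cref{lem:infeasiblegap-submodular-total-value}: a sharper realization probability, and a re-weighted version of the peeling argument with \Cref{lem:infeasiblegap-submodular-tentative-values}. \Cref{lem:infeasiblegap-submodular-tentative-values} itself is reused verbatim, since its proof never used the cost structure.

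\textbf{Step 1 (realization probability).} Fix $Q_{\ell-1}$ and an agent~$j$. Since realized assignments are a subset of tentative ones, it suffices to lower-bound the probability that no tentative assignment to $j$ occurs in rounds $t+1,\dots,\ell-1$. I would expose the rounds backwards, from $k=\ell-1$ down to $t+1$. Conditioned on $Q_k$ and on everything in rounds $k+1,\dots,\ell-1$, the item $\pi(k)$ is uniform in $Q_k$. Exactly as in \Cref{lem:infeasiblegap-submodular-success-probability}, feasibility of $\tilde{\x}(Q_k)$ gives
\begin{equation*}
\CP{j^{(k)}=j}{Q_k,\dots} = \frac1k\sum_{i\in Q_k}\mathbb{E}_{\tilde{\x}}\bigl[\tilde x_{i,j}(Q_k)\bigr]\le \frac1k .
\end{equation*}
Moreover, the events of earlier rounds depend only on $Q_{k-1}$ and on randomness of those rounds. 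Chaining these conditional bounds yields
\begin{equation*}
\CP{j \text{ free at round } \ell}{Q_{\ell-1}}\ \ge\ \prod_{k=t+1}^{\ell-1}\Bigl(1-\frac1k\Bigr)=\frac{t}{\ell-1}.
\end{equation*}
This improves on $1-\sum_k 1/k$. As in the proof of \Cref{lem:infeasiblegap-submodular-total-value}, the bound is independent of $f(\y^{(\ell)}\mid\y^{(\ell+)})$, which is determined by $\pi(n),\dots,\pi(\ell)$ and $j^{(n)},\dots,j^{(\ell)}$. Writing $\Delta_\ell=\E{f(\y^{(\ell)}\mid\y^{(\ell+)})}$, we therefore obtain
\begin{equation*}
\E{f(\x)}\ge\sum_{\ell>t}\frac{t}{\ell-1}\,\Delta_\ell .
\end{equation*}

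\textbf{Step 2 (peeling with new weights).} \Cref{lem:infeasiblegap-submodular-tentative-values} gives $\Delta_\ell\ge \frac{\beta}{n}f(\x^*)-\frac1\ell\sum_{k>\ell}\Delta_k$. I would apply it round by round, from $\ell=t+1$ upward, each time with a multiplier $\lambda_\ell\ge0$, and maintain the invariant that the remaining coefficient on $\Delta_\ell$ stays nonnegative. In the continuum limit with $s=\ell/n$, let $a(s)=\tau/s$ be the weight from Step~1. The tight choice of multiplier solves $a(s)=\lambda(s)+\int_\tau^s\lambda(r)/r\,\d r$, i.e.\ $(s\lambda)'=s\,a'(s)=-\tau/s$. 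With $\lambda(\tau)=1$ this gives
\begin{equation*}
\lambda(s)=\frac{\tau}{s}\Bigl(1-\ln\frac{s}{\tau}\Bigr),
\end{equation*}
which is nonnegative on $[\tau,1]$ provided $\tau\ge 1/e$. Discretely, I would state the induction hypothesis exactly as in the proof of \Cref{lem:infeasiblegap-submodular-total-value}: the left-hand side is at least the sum of the remaining coefficients times $\Delta_\ell$ for $\ell>h$, plus $\frac{\beta}{n}\sum_{\ell\le h}\lambda_\ell f(\x^*)$. The discrete $\lambda_\ell$ would be chosen so that the step from $h$ to $h+1$ is an identity, up to $o(1)$ errors absorbed at the end.

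\textbf{Step 3 (optimization).} Summing up gives $\E{f(\x)}\ge \beta\bigl(\int_\tau^1\lambda(s)\,\d s-o(1)\bigr)f(\x^*)$, and
\begin{equation*}
\int_\tau^1\lambda(s)\,\d s=\tau\Bigl(u-\frac{u^2}{2}\Bigr)\quad\text{with } u=\ln(1/\tau).
\end{equation*}
Maximizing $e^{-u}(u-u^2/2)$ leads to $u^2/2-2u+1=0$, so $u=2-\sqrt2$ and $\tau=e^{\sqrt2-2}\ge 1/e$. This yields the value $e^{\sqrt2-2}(2-\sqrt2)\frac{\sqrt2}{2}=(\sqrt2-1)e^{\sqrt2-2}$, as claimed.

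I expect the main obstacle to be the discrete bookkeeping in Step~2. One must find exact discrete multipliers (for instance via the recursion $\ell\lambda_\ell$ telescoping against $t/(\ell-1)-t/\ell$) that keep all intermediate coefficients nonnegative, and then bound the resulting harmonic-type sums by the integrals with only $o(1)$ loss. A secondary point needing care is the conditional-independence argument in Step~1, in particular that the per-round bound $1/k$ holds conditioned on the later rounds' events as well.
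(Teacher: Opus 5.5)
Your argument follows the paper's proof of \Cref{theo:bipartite-matching}. Step~1 is \Cref{lem:bipartite-matching-success-probability}; your backward exposure with the chain rule is a cleaner justification than the paper's appeal to ``independent events.'' Step~2 is \Cref{lem:appendix-induction-inequality}, and Step~3 is \Cref{lem:appendix-optimization}.

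The discrete bookkeeping you flag as the main obstacle is easier than you expect. Use as multiplier for $\Delta_{h+1}$ its entire current coefficient, $\lambda_{h+1}=\frac{t}{h}(1-S_h)$ with $S_h=\sum_{k=t+1}^{h}\frac{1}{k-1}$; this is the discretization of your $\lambda(s)$. Since $\frac{t}{h}S_h+\frac{t}{h(h+1)}(1-S_h)=\frac{t}{h+1}S_{h+1}$, the invariant ``the coefficient of $\Delta_\ell$ for $\ell>h$ equals $\frac{t}{\ell-1}-\frac{t}{h}S_h$'' propagates exactly, with no $o(1)$ error. Nonnegativity of the multipliers only needs $S_h\le 1$, which the paper obtains from $t\ge n/2$. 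The $o(1)$ terms arise only when harmonic sums are replaced by integrals.

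One claim in your setup is false. With $c_{i,j}=B_j=1$, \Cref{alg:ro-infeasible-submodular-gap} tests $\sum_{k<\ell}c_{\pi(k),j}x_{\pi(k),j}\le B_j$. That test still admits a second item to an agent that already holds one, so the output is not a matching and is not ``automatically feasible.'' You must realize $\y^{(\ell)}$ only if $j^{(\ell)}$ is currently unmatched. This is exactly \textsc{SubMatching} (\Cref{alg:ro-bipartite-matching}), and it is the rule your Step~1 actually analyzes. Tentative assignments do not depend on the realization rule, so \Cref{lem:infeasiblegap-submodular-tentative-values} still applies unchanged.

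Finally, to get the corollary's two items, state the instantiation of $\beta$ explicitly. Take $\beta=1-\frac1e$ from continuous greedy (\Cref{lem:continuous-greedy}) over the fractional bipartite matching polytope on $Q_\ell\cup V$, which is solvable. Take $\beta=1$ by solving the offline problem exactly.
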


We formalize the algorithm for the bipartite matching problem in \Cref{alg:ro-bipartite-matching}. 
For the analysis, we fix an instance of the submodular secretary matching problem. 
As in the previous section, we denote the last round of the sampling phase by~$t \in [n]$, the set of vertices that are revealed in the first~$\ell$ rounds by $Q_\ell$, and we let $M$ and $M^*$ denote the matching returned by the algorithm and an optimal matching, respectively. 
In each round~$\smash{\ell > t}$, the algorithm computes a potentially fractional $\beta$-approximate matching on $Q_\ell \cup V$ that we denote by $\smash{\tilde{\x}(Q_\ell)}$.
As in \Cref{alg:ro-bipartite-matching}, we define the randomly selected vertex $\smash{v^{(\ell)} \in V}$ and we define $\smash{e^{(\ell)} = ( \pi(\ell), v^{(\ell)} )} $ if a vertex is selected in round~$\ell$.
Further, we let $\smash{M^{(\ell)}}$ denote the set that contains the edge added to $M$ in a round~$\smash{\ell > t}$, i.e., $\smash{M^{(\ell)} = \{e^{(\ell)}\} }$ if the tentative edge is added to $M$ and $\smash{M^{(\ell)} = \emptyset}$ otherwise.
Additionally, for every round $\ell>t$, we let $\smash{M^{(\ell+)} = \bigcup_{k=\ell+1}^{n} M^{(k)}}$ and $\smash{E^{(\ell+)} = \bigcup_{k=\ell+1}^{n} \{e^{(k)}\}}$ denote the union of all edges added to $M$ and the union of all tentative edges in subsequent rounds, respectively.

\begin{algorithm}[t]
    \SetArgSty{textnormal}
    \KwIn{bipartite graph $G = (U \cup V, E)$, set function $f \colon 2^E \to \R$, random permutation $\pi$}
    \vspace{-.04cm}
    \KwOut{matching $M \subseteq E$}
    $ M \gets \emptyset$; $Q_0 \gets \emptyset$\;
    \For{\text{rounds} $\ell \in [n]$}{
        $ Q_\ell \gets Q_{\ell-1} \cup \{\pi(\ell)\}$\;
        \If{$\ell > \big\lfloor n/e^{2-\sqrt{2}} \big\rfloor$}{
            $\tilde{\x}(Q_\ell) \gets $ fractional $\beta$-approximate matching on $Q_\ell \cup V$\;
            $v^{(\ell)} \gets $ randomly select $v \in V \, \dot\cup\, \{0\}$ where $ \P{v^{(\ell)} = v} = \begin{cases}
                \tilde{x}_{\pi(\ell),v} (Q_\ell) &\text{ if } v \in V \\
                1 - \sum_{v' \in V} \tilde{x}_{\pi(\ell),v'} (Q_\ell) &\text{ if } v=0
            \end{cases} $\;
            \If{$v^{(\ell)} \in V$}{
                $e^{(\ell)} = ( \pi(\ell), v^{(\ell)} ) $\;
                \If{$ v^{(\ell)} $ is not covered by $M$}{
                    $M \gets M \cup \{e^{(\ell)}\}$\;
                }
            }
        }
    }
    {\bfseries return} $M$\;
    \caption{\textsc{SubMatching}}
    \label{alg:ro-bipartite-matching}
\end{algorithm}

The additional improvement of the competitive ratio relies on an improved probability for the event that the tentative edge is added to the matching compared to the bound that is implied by \Cref{lem:infeasiblegap-submodular-success-probability}. Note that the following lemma generalizes a part of \citet[Lemma~1]{KesselheimRTV13} such that we can use a fractional $\beta$-approximate matching in every round of the algorithm that we can obtain by the continuous greedy algorithm with $ \smash{\beta = 1 - \frac{1}{e}}$.

\begin{lemma}
    \label{lem:bipartite-matching-success-probability}
    Let $\ell \in \{ t+1,\dotsc,n \}$ and let $Q_{\ell-1} \subset [n]$ with $|Q_{\ell-1}| = \ell - 1$ be arbitrary. 
    Depending on items $\pi(\ell-1),\pi(\ell-2),\dotsc,\pi(t+1)$ drawn uniformly at random from
    \begin{align*}
        Q_{\ell-1}, Q_{\ell-2} = Q_{\ell-1} \setminus \{\pi(\ell-1)\},\dotsc,Q_{t+1} = Q_{t+2} \setminus \{\pi(t+2)\},
    \end{align*}
    respectively, and depending on the randomized selection of a vertex from $V$ in rounds $t+1,\dotsc,\ell-1$, we have for every $v \in V$ that
    \begin{equation*}
        \bigCP{v \text{ is not covered by $M$ after round $\ell-1$}}{Q_{\ell-1}} \geq \prod_{k=t+1}^{\ell-1} \biggl(1 - \frac{1}{k}\biggr).
    \end{equation*}
\end{lemma}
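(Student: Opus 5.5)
The plan is a backward induction over the rounds $k=\ell-1,\ell-2,\dots,t+1$, in the style of \citet[Lemma~1]{KesselheimRTV13}. The key observation is that $v$ is covered by $M$ after round $\ell-1$ if and only if some round $k\in\{t+1,\dots,\ell-1\}$ has tentative vertex $v^{(k)}=v$. Indeed, the first such round always adds its edge, since $v$ is still uncovered at that moment. Hence
\begin{equation*}
\bigCP{v \text{ not covered after round } \ell-1}{Q_{\ell-1}} = \bigCP{v^{(k)}\neq v \text{ for all } k\in\{t+1,\dots,\ell-1\}}{Q_{\ell-1}}.
\end{equation*}
This reduces the claim to bounding the probability that $v$ is never selected as a tentative partner. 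Unlike \Cref{lem:infeasiblegap-submodular-success-probability}, we do not want a union bound, because that would only give $1-\sum_k \frac{1}{k}$. We want a product, which requires conditioning round by round.

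First, as in the proof of \Cref{lem:infeasiblegap-submodular-success-probability}, fix $k$ and an arbitrary set $Q_k$. Conditioned on $Q_k$, and on everything that happened in rounds $k+1,\dots,\ell-1$ (the items $\pi(k+1),\dots,\pi(\ell-1)$ and the selected vertices), the item $\pi(k)$ is uniform on $Q_k$. The selection $v^{(k)}$ depends only on $Q_k$, $\pi(k)$ and the fresh randomness of round $k$, including the possibly randomized computation of $\tilde{\x}(Q_k)$. It is therefore conditionally independent of the later rounds' outcomes given $Q_k$. Consequently
\begin{equation*}
\bigCP{v^{(k)}=v}{Q_k,\text{rounds } k+1,\dots,\ell-1} = \frac{1}{k}\sum_{i\in Q_k}\mathbb{E}_{\tilde{\x}}\bigl[\tilde{x}_{i,v}(Q_k)\bigr]\le \frac{1}{k}.
\end{equation*}
The inequality uses the matching constraint at vertex $v$, namely $\sum_i \tilde{x}_{i,v}(Q_k)\le 1$, which is the budget constraint with unit costs and budget.

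Second, we chain these bounds. Define $A_k$ as the event that $v^{(k')}\neq v$ for all $k'\in\{k,\dots,\ell-1\}$. Then
\begin{equation*}
\bigCP{A_k}{Q_{\ell-1}}=\mathbb{E}\Bigl[\mathds{1}_{A_{k+1}}\cdot \bigCP{v^{(k)}\neq v}{Q_k,\text{rounds } k+1,\dots,\ell-1}\Bigr]\ge \Bigl(1-\tfrac{1}{k}\Bigr)\,\bigCP{A_{k+1}}{Q_{\ell-1}}.
\end{equation*}
Note that $A_{k+1}$, as well as $Q_k=Q_{\ell-1}\setminus\{\pi(\ell-1),\dots,\pi(k+1)\}$, is determined by the later rounds. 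Inducting from $k=\ell-1$ down to $t+1$, with $A_\ell$ being the certain event, gives the product $\prod_{k=t+1}^{\ell-1}(1-\frac{1}{k})$.

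The main obstacle is justifying the conditional bound of $\frac{1}{k}$ given the full history of later rounds, rather than just given $Q_k$. This relies on the backward-sampling description of $\pi$. Given $Q_k$, the random choice of $\pi(k)$ and of round $k$'s selection is independent of how $Q_k$ arose from $Q_{\ell-1}$ and of the later selections. The reason is that the algorithm's tentative choice in round $k$ uses only $Q_k$ and $\pi(k)$, not $M$. I would state this independence explicitly and then apply the feasibility of $\tilde{\x}(Q_k)$ exactly as in \Cref{lem:infeasiblegap-submodular-success-probability}.
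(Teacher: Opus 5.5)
Your proof is correct and follows the paper's route. You reduce non-coverage of $v$ to $v^{(k)}\neq v$ for all $k\in\{t+1,\dots,\ell-1\}$, bound each per-round probability by $\frac{1}{k}$ using that $\pi(k)$ is uniform on $Q_k$ together with $\sum_{u\in Q_k}\tilde{x}_{u,v}(Q_k)\le 1$, and multiply. Your backward chaining, which conditions on the outcomes of rounds $k+1,\dots,\ell-1$, justifies the product step more carefully than the paper's appeal to ``independent events''. What is actually needed is that the $\frac{1}{k}$ bound holds uniformly given the later history, not that the events $\{v^{(k)}\neq v\}$ are independent.
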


\begin{proof}
    Let $v \in V$ and $\ell \in \{ t+1,\dotsc,n \}$ be arbitrary and let $ Q_{\ell-1} $ be as in the statement.
    The vertex $v$ is not covered by the matching $M$ after round $\ell-1$ if, in every round $k \in \{t+1,\dotsc,\ell-1\}$, it is not selected as the endpoint $v^{(k)}$ of the tentative edge $e^{(k)}$. 
    We show in the following for every $k \in \{t+1,\dotsc,\ell-1\}$ that $ v^{(k)} = v$ has a probability of at most $1/k$ and, most importantly, that this bound only depends on the randomized selection of $\pi(k)$ and $v^{(k)}$ in round $k$.
    Therefore, let $k \in \{t+1,\dotsc,\ell-1\}$ and $Q_k \subseteq Q_{\ell-1}$ with $|Q_k| = k$ be arbitrary.
    Given $Q_k$, the item $\pi(k)$ is uniformly drawn from $Q_k$. We obtain that
    \begin{equation*}
        \bigCP{v^{(k)} = v}{Q_k} = \frac{1}{k} \sum_{u \in Q_k} \bigCP{v^{(k)} = v}{Q_k, \pi(k) = u}.
    \end{equation*}
    Now, for each $u \in Q_k$, the probability that $v^{(k)} = v$ under the condition that $\pi(k) = u$ is given by $\tilde{x}_{u,v}(Q_k)$. Note that $\tilde{\x}(Q_k)$ might be randomized. Due to the feasibility of the fractional matching, we know that $ \sum_{u \in Q_k} \tilde{x}_{u,v}(Q_k) \leq 1 $ for every realization of $\tilde{\x}(Q_k)$ and this leads to 
    \begin{equation*}
        \frac{1}{k} \sum_{u \in Q_k} \bigCP{v^{(k)} = v}{Q_k, \pi(k) = u} = \frac{1}{k} \sum_{u \in Q_k} \mathbb{E}_{\tilde{\x}}\bigl[\tilde{x}_{u,v}(Q_k)\bigr] \leq \frac{1}{k},
    \end{equation*}
    where the expectation $\mathbb{E}_{\tilde{\x}}$ is only taken over the potentially randomized computation of~$\tilde{\x}(Q_k)$.
    As we used independent events for every $k \in \{t+1,\dotsc,\ell-1\}$, we conclude that
    \begin{align*}
        &\bigCP{v \text{ is not covered by $M$ after round $\ell-1$}}{Q_{\ell-1}} \\
        &\qquad= \bigCP{v^{(k)} \neq v \text{ for all } k \in \{t+1,\dotsc,\ell-1\}}{Q_{\ell-1}} \geq \prod_{k=t+1}^{\ell-1} \biggl(1 - \frac{1}{k}\biggr),
    \end{align*}
    as claimed.
\end{proof}

With the previous lemma and the bound on the tentative values from \Cref{sec:submodular-assignment}, we now proceed with the proof of \Cref{theo:bipartite-matching} where we bound the expected value of the matching returned by \Cref{alg:ro-bipartite-matching}. 
As in the proof of \Cref{theo:general-beta-approximation}, we inductively apply \Cref{lem:infeasiblegap-submodular-tentative-values}, which again leads to a tedious computation. Also the optimization of the resulting bound requires a more elaborate computation. Those parts of the proof are deferred to \Cref{app:deferred-proofs} in the appendix.

\begin{proof}[Proof of \Cref{theo:bipartite-matching}]
    Let $t \geq n/2$ and let $M$ be the matching returned by \Cref{alg:ro-bipartite-matching}. We divide the total value of the edges contained in $M$ into the marginal increases of every round with respect to subsequent rounds and we use that the selected edges are a subset of the tentative edges. Then, we obtain that
    \begin{equation*}
        \E{f(M)} = \sum_{\ell=t+1}^n \bigE{f\bigl(M^{(\ell)} \mid M^{(\ell+)} \bigr)} \geq \sum_{\ell=t+1}^n \bigE{f\bigl(M^{(\ell)} \mid E^{(\ell+)} \bigr)}.
    \end{equation*}
    In every round $ \ell > t$, we have that $ M^{(\ell)} = \{e^{(\ell)}\} $ if a vertex $v^{(\ell)} \in V$ is selected and the tentative edge is added to $M$, otherwise, we have $M^{(\ell)} = \emptyset $. 
    Hence, $ f(M^{(\ell)} \mid E^{(\ell+)} ) = f(e^{(\ell)} \mid E^{(\ell+)} )$ if the tentative assignment is realized and $f(M^{(\ell)} \mid E^{(\ell+)}) = 0$ otherwise.  
    The tentative edge $e^{(\ell)}$ only depends on $Q_\ell$, $\pi(\ell)$, and $v^{(\ell)}$.
    Thus, $ f(e^{(\ell)} \mid E^{(\ell+)}) $ is fixed if we fix the online vertices $\pi(n),\dotsc,\pi(\ell)$ and the offline vertices $v^{(n)},\dotsc,v^{(\ell)}$.
    This defines the set $Q_{\ell-1}$ of vertices revealed in the first $\ell-1$ rounds. \Cref{lem:bipartite-matching-success-probability} shows that the probability that the endpoint of the tentative edge $e^{(\ell)}$ in $V$ was not matched in an earlier round is at least 
    \begin{equation*}
        \prod_{k=t+1}^{\ell-1} \biggl(1 - \frac{1}{k}\biggr) = \frac{t}{\ell-1}.
    \end{equation*}
    In particular, this bound only depends on the random order and the selected offline vertices of the first $\ell-1$ rounds and thus, it does not depend on the set $Q_{\ell-1}$ or the vertices $v^{(n)},\dotsc,v^{(\ell)}$. 
    Hence, we obtain that
    \begin{equation*}
        \sum_{\ell=t+1}^n \bigE{f\bigl(M^{(\ell)} \mid E^{(\ell+)}\bigr)} \geq \sum_{\ell=t+1}^{n} \frac{t}{\ell-1} \, \bigE{f\bigl(e^{(\ell)} \mid E^{(\ell+)}\bigr)}.
    \end{equation*}
    Now, using that the submodular secretary matching problem is a special case of the problem studied in \Cref{sec:submodular-assignment}, we can apply \Cref{lem:infeasiblegap-submodular-tentative-values} that states for every $\ell \in \{ t+1,\dotsc,n \}$ that
    \begin{equation*}
        \bigE{f\bigl(e^{(\ell)} \mid E^{(\ell+)}\bigr)} \geq \frac{\beta}{n} \, f(M^*) - \frac{1}{\ell} \, \bigE{f\bigl(E^{(\ell+)}\bigr)}.
    \end{equation*}
    We claim that this leads to the following inequality:
    \begin{equation}
        \sum_{\ell=t+1}^{n} \frac{t}{\ell-1} \, \bigE{f\bigl(e^{(\ell)} \mid E^{(\ell+)}\bigr)} \geq \beta \frac{t}{n} \sum_{\ell=t+1}^{n} \frac{1}{\ell-1} \biggl( 1 - \sum_{k=t+1}^{\ell-1} \frac{1}{k-1} \biggr) \, f(M^*).
    \end{equation}
    The proof of this inequality is deferred to \Cref{lem:appendix-induction-inequality} in \Cref{app:deferred-proofs} in the appendix. 
    We conclude that we have
    \begin{equation*}
        \E{f(M)} \geq \beta \frac{t}{n} \sum_{\ell=t+1}^{n} \frac{1}{\ell-1} \biggl( 1 - \sum_{k=t+1}^{\ell-1} \frac{1}{k-1} \biggr) \, f(M^*).
    \end{equation*}
    For the statement of the theorem it now remains to show that
    \begin{equation*}
        \frac{t}{n} \sum_{\ell=t+1}^{n} \frac{1}{\ell-1} \biggl( 1 - \sum_{k=t+1}^{\ell-1} \frac{1}{k-1} \biggr) \geq e^{\sqrt{2}-2}\bigl(\sqrt{2}-1\bigr) + o(1).
    \end{equation*}
    This part of the proof is deferred to \Cref{lem:appendix-optimization} in \Cref{app:deferred-proofs} in the appendix. 
\end{proof}

\newpage
\appendix
\section{Deferred Parts from the Proof of \texorpdfstring{\Cref{theo:bipartite-matching}}{Theorem~4.1}}
\label{app:deferred-proofs}

\begin{lemma}
    \label{lem:appendix-induction-inequality}
    We have that
    \begin{equation*}
        \sum_{\ell=t+1}^{n} \frac{t}{\ell-1} \, \bigE{f\bigl(e^{(\ell)} \mid E^{(\ell+)}\bigr)} \geq \beta \frac{t}{n} \sum_{\ell=t+1}^{n} \frac{1}{\ell-1} \biggl( 1 - \sum_{k=t+1}^{\ell-1} \frac{1}{k-1} \biggr) \, f(M^*).
    \end{equation*}
\end{lemma}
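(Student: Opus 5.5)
The plan is to repeat the inductive peeling argument from the proof of \Cref{lem:infeasiblegap-submodular-total-value}, with the weights $\frac{t}{\ell-1}$ in place of $1-\sum_{k=t+1}^{\ell-1}\frac1k$. Write $a_\ell = \bigE{f\bigl(e^{(\ell)} \mid E^{(\ell+)}\bigr)} \geq 0$ and $b = \frac{\beta}{n} f(M^*)$.

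Since $f(E^{(\ell+)})$ splits into the later marginals, \Cref{lem:infeasiblegap-submodular-tentative-values} reads
\begin{equation*}
a_\ell \geq b - \frac{1}{\ell}\sum_{k=\ell+1}^n a_k .
\end{equation*}

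\textbf{Induction.} I would prove, for every $h\in\{t,\dots,n\}$,
\begin{equation*}
\sum_{\ell=t+1}^{n}\frac{t}{\ell-1}\,a_\ell \;\geq\; \sum_{\ell=h+1}^{n} c^{(h)}_\ell\, a_\ell + b\sum_{\ell=t+1}^{h} d_\ell ,
\end{equation*}
with coefficients defined recursively by
\begin{equation*}
c^{(t)}_\ell=\frac{t}{\ell-1},\qquad d_{h+1}=c^{(h)}_{h+1},\qquad c^{(h+1)}_\ell=c^{(h)}_\ell-\frac{d_{h+1}}{h+1}\quad(\ell\geq h+2).
\end{equation*}
In closed form this gives
\begin{equation*}
c^{(h)}_\ell=\frac{t}{\ell-1}-\sum_{k=t+1}^{h}\frac{d_k}{k}.
\end{equation*}
The base case $h=t$ is an identity. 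The step $h\to h+1$ applies the displayed form of \Cref{lem:infeasiblegap-submodular-tentative-values} to $a_{h+1}$, multiplied by $c^{(h)}_{h+1}$. This requires $d_{h+1}\geq 0$. Using $d_k\leq\frac{t}{k-1}$ and telescoping,
\begin{equation*}
d_{h+1}\geq \frac{t}{h}-\sum_{k=t+1}^{h}\frac{t}{(k-1)k} = \frac{2t}{h}-1,
\end{equation*}
which is non-negative because $t\geq n/2$ and $h\leq n$. At $h=n$ we obtain $\sum_\ell \frac{t}{\ell-1}a_\ell\geq b\sum_{\ell=t+1}^n d_\ell$.

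\textbf{Comparing with the target.} It remains to show
\begin{equation*}
\frac{n}{t}\sum_{\ell=t+1}^n d_\ell \;\geq\; \sum_{\ell=t+1}^{n}\frac{n}{\ell-1}\Bigl(1-\sum_{k=t+1}^{\ell-1}\frac{1}{k-1}\Bigr)
\end{equation*}
after rescaling by $\frac{t}{n}\cdot\frac{n}{t}$ appropriately. Equivalently, I would show $d_\ell$ dominates $\frac{t}{\ell-1}\bigl(1-\sum_{k=t+1}^{\ell-1}\frac{1}{k-1}\bigr)$ termwise, or at least in total. The natural route is a second induction establishing a lower bound of the form $d_{h+1}\geq \frac{t}{h}\bigl(1-S_h\bigr)$ with $S_h=\sum_{k=t+1}^{h}\frac{1}{k-1}$. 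For this I would plug the inductive lower bounds into $\sum_k d_k/k$, but I need an upper bound there, so I would use $d_k\leq \frac{t}{k-1}\leq \frac{t}{h}\cdot\frac{h}{k-1}$.

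\textbf{Main obstacle.} This last comparison is where I expect the real difficulty. A naive termwise bound does not obviously close, because subtracting $\frac{d_k}{k}$ costs $\frac{t}{k(k-1)}$, not $\frac{t}{h(k-1)}$. So I would instead compare the two totals after exchanging the order of summation in both double sums. Our side is
\begin{equation*}
\sum_h d_h=\sum_h\frac{t}{h-1}-\sum_k\frac{d_k}{k}(n-k),
\end{equation*}
and the target's subtracted term is $\sum_k\frac{1}{k-1}\sum_{\ell>k}\frac{t}{\ell-1}$. I would bound the difference using $d_k\leq\frac{t}{k-1}$ together with $t\geq n/2$, allowing an $o(1)$ slack, since the final use in \Cref{lem:appendix-optimization} only needs the asymptotic value. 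If this fails, I would instead refine the choice of the target coefficients so that the recursion closes exactly, as in the proof of \Cref{lem:infeasiblegap-submodular-total-value}.
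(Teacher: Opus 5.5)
Your set-up matches the paper's proof. The recursion $c^{(h+1)}_\ell=c^{(h)}_\ell-c^{(h)}_{h+1}/(h+1)$ is exactly what the paper does when it applies \Cref{lem:infeasiblegap-submodular-tentative-values} to $x^{(h+1)}$ with its full current coefficient, and your nonnegativity check $d_{h+1}\geq\frac{2t}{h}-1\geq 0$ is valid. The gap is the last step: you leave it open, and the routes you suggest for it cannot work. Suppose the only information you use in the subtracted sum $\sum_k d_k/k$ is $0\leq d_k\leq\frac{t}{k-1}$, whether termwise or after exchanging the order of summation. Then you only get $d_{h+1}\geq\frac{2t}{h}-1$, i.e.\ the weaker bound $\frac{\beta}{n}\sum_{\ell=t+1}^n\bigl(\frac{2t}{\ell-1}-1\bigr)f(M^*)$ from \Cref{lem:infeasiblegap-submodular-total-value}. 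Indeed, your exchanged expression with $d_k=\frac{t}{k-1}$ plugged in evaluates to exactly this sum. With $z=\ln(n/t)$, it falls short of the claimed right-hand side by $\beta\bigl(1-e^{-z}(1+z+\frac{z^2}{2})\bigr)f(M^*)+o(1)$. This is a positive constant (about $0.02\,\beta f(M^*)$ at the $t$ used later), so no $o(1)$ slack can absorb it.

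The missing observation is that your recursion can be solved in closed form, after which there is nothing left to compare. Let $S_h=\sum_{k=t+1}^{h}\frac{1}{k-1}$ and show by induction on $h$ that $\sum_{k=t+1}^{h}\frac{d_k}{k}=\frac{t}{h}S_h$. This is trivial for $h=t$. If it holds for $h$, then $d_{h+1}=\frac{t}{h}(1-S_h)$, and $\frac{t}{h}S_h+\frac{t}{h(h+1)}(1-S_h)=\frac{t}{h+1}\bigl(S_h+\frac{1}{h}\bigr)=\frac{t}{h+1}S_{h+1}$. Hence $c^{(h)}_\ell=\frac{t}{\ell-1}-\frac{t}{h}S_h$ and $d_\ell=\frac{t}{\ell-1}\bigl(1-\sum_{k=t+1}^{\ell-1}\frac{1}{k-1}\bigr)$ exactly, so $b\sum_\ell d_\ell$ \emph{equals} the right-hand side of the lemma. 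This closed form is precisely the induction hypothesis in the paper's proof. It also gives nonnegativity directly, from $S_h\leq\frac{h-t}{t}\leq 1$ since $t\geq n/2$. Your ``second induction'' was heading in this direction. However, it has to carry the exact value of $d_k$ (equivalently, the sharp upper bound $d_k\leq\frac{t}{k-1}(1-S_{k-1})$), not a lower bound combined with the crude bound $d_k\leq\frac{t}{k-1}$.
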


\begin{proof}
    For ease of exposition, we define $ x^{(\ell)} = \E{f(e^{(\ell)} \mid E^{(\ell+)})} $. Note that for every $ \ell \in \{ t+1,\dotsc,n \}$ the statement of \Cref{lem:infeasiblegap-submodular-tentative-values} becomes 
    \begin{equation}
        \label{eq:theo-bipartite-matching-restated-lemma}
        x^{(\ell)} \geq \frac{\beta}{n} \, f(M^*) - \frac{1}{\ell} \sum_{k=\ell+1}^{n} x^{(k)},
    \end{equation}
    where we again split the value of subsequent tentative edges into their marginal increases.
    In order to prove the lemma, we show for every $h \in \{t,\dotsc,n\}$ that
    \begin{equation}
        \label{eq:bipartite-matching-inequality}
        \sum_{\ell=t+1}^{n} \frac{t}{\ell-1} \, x^{(\ell)} \geq \sum_{\ell=h+1}^n \biggl( \frac{t}{\ell-1} - \frac{t}{h} \sum_{k=t+1}^{h} \frac{1}{k-1} \biggr) \, x^{(\ell)} + \beta \frac{t}{n} \Biggl( \sum_{\ell=t+1}^{h} \frac{1}{\ell-1} \biggl( 1 - \sum_{k=t+1}^{\ell-1} \frac{1}{k-1} \biggr) \Biggr) \, f(M^*).
    \end{equation}
    We prove the inequality by induction. For $h = t$ the right hand side is equal to the left hand side. Then, assuming that this inequality holds for a fixed $h \in \{t,\dotsc,n-1\}$, we need to show that
    \begin{multline*}
        \sum_{\ell=h+1}^n \biggl( \frac{t}{\ell-1} - \frac{t}{h} \sum_{k=t+1}^{h} \frac{1}{k-1} \biggr) \, x^{(\ell)} + \beta \frac{t}{n} \Biggl( \sum_{\ell=t+1}^{h} \frac{1}{\ell-1} \biggl( 1 - \sum_{k=t+1}^{\ell-1} \frac{1}{k-1} \biggr) \Biggr) \, f(M^*) \\
        \geq \sum_{\ell=h+2}^n \biggl( \frac{t}{\ell-1} - \frac{t}{h+1} \sum_{k=t+1}^{h+1} \frac{1}{k-1} \biggr) \, x^{(\ell)} + \beta \frac{t}{n} \Biggl( \sum_{\ell=t+1}^{h+1} \frac{1}{\ell-1} \biggl( 1 - \sum_{k=t+1}^{\ell-1} \frac{1}{k-1} \biggr) \Biggr) \, f(M^*).
    \end{multline*}
    As we have for every $ t \geq n/2 $ and every $h \in \{t,\dotsc,n-1\}$ that $ \sum_{k=t+1}^{h} \frac{1}{k-1} \leq 1 $, we can apply \Cref{lem:infeasiblegap-submodular-tentative-values} as stated in equation~\eqref{eq:theo-bipartite-matching-restated-lemma} on $x^{(h+1)}$. We obtain that
    {\allowdisplaybreaks
    \begin{align*}
        &\sum_{\ell=h+1}^{n} \biggl( \frac{t}{\ell-1} - \frac{t}{h} \sum_{k=t+1}^{h} \frac{1}{k-1} \biggr) \, x^{(\ell)} + \beta \frac{t}{n} \Biggl( \sum_{\ell=t+1}^{h} \frac{1}{\ell-1} \biggl( 1 - \sum_{k=t+1}^{\ell-1} \frac{1}{k-1} \biggr) \Biggr) \, f(M^*) \\
        &\quad\geq \sum_{\ell=h+2}^{n} \biggl( \frac{t}{\ell-1} - \frac{t}{h} \sum_{k=t+1}^{h} \frac{1}{k-1} \biggr) \, x^{(\ell)} + \beta \frac{t}{n} \Biggl( \sum_{\ell=t+1}^{h} \frac{1}{\ell-1} \biggl( 1 - \sum_{k=t+1}^{\ell-1} \frac{1}{k-1} \biggr) \Biggr) \, f(M^*) \\
        &\qquad + \frac{t}{h} \biggl( 1 - \sum_{k=t+1}^{h} \frac{1}{k-1} \biggr) \biggl( \frac{\beta}{n} \, f(M^*) - \frac{1}{h+1} \, \sum_{\ell=h+2}^{n} x^{(\ell)} \biggr) \\
        &\quad= \sum_{\ell=h+2}^n \biggl( \frac{t}{\ell-1} - \frac{t}{h} \sum_{k=t+1}^{h} \frac{1}{k-1} \biggr) \, x^{(\ell)} - \sum_{\ell=h+2}^{n} \Biggl( \frac{t}{h} \frac{1}{h+1} \biggl( 1 - \sum_{k=t+1}^{h} \frac{1}{k-1} \biggr) \Biggr) x^{(\ell)} \\
        &\qquad + \beta \frac{t}{n} \Biggl( \sum_{\ell=t+1}^{h} \frac{1}{\ell-1} \biggl( 1 - \sum_{k=t+1}^{\ell-1} \frac{1}{k-1} \biggr) \Biggr) \, f(M^*) + \frac{\beta}{n} \frac{t}{h} \biggl( 1 - \sum_{k=t+1}^{h} \frac{1}{k-1} \biggr) \, f(M^*)\\
        &\quad= \sum_{\ell=h+2}^{n} \Biggl( \frac{t}{\ell-1} - \frac{t}{h(h+1)} - \frac{t}{h} \biggl( 1 - \frac{1}{h+1} \biggr) \Biggl( \sum_{k=t+1}^{h} \frac{1}{k-1} \Biggr) \Biggr) x^{(\ell)} \\
        &\qquad + \beta \frac{t}{n} \Biggl( \sum_{\ell=t+1}^{h+1} \frac{1}{\ell-1} \biggl( 1 - \sum_{k=t+1}^{\ell-1} \frac{1}{k-1} \biggr) \Biggr) \, f(M^*) \\
        &\quad= \sum_{\ell=h+2}^{n} \biggl( \frac{t}{\ell-1} - \frac{t}{h+1} \sum_{k=t+1}^{h+1} \frac{1}{k-1} \biggr) \, x^{(\ell)} + \beta \frac{t}{n} \Biggl( \sum_{\ell=t+1}^{h+1} \frac{1}{\ell-1} \biggl( 1 - \sum_{k=t+1}^{\ell-1} \frac{1}{k-1} \biggr) \Biggr) \, f(M^*).
    \end{align*}
    }
    Thus, we have shown~\eqref{eq:bipartite-matching-inequality} and we obtain for $h = n$ that 
    \begin{equation*}
        \sum_{\ell=t+1}^{n} \frac{t}{\ell-1} \, x^{(\ell)} \geq \beta \frac{t}{n} \sum_{\ell=t+1}^{n} \frac{1}{\ell-1} \biggl( 1 - \sum_{k=t+1}^{\ell-1} \frac{1}{k-1} \biggr) \, f(M^*),
    \end{equation*}
    as claimed.
\end{proof}

\begin{lemma}
\label{lem:appendix-optimization}
    For $t = \lfloor e^{\sqrt{2}-2}n \rfloor$, we have that
    \begin{equation*}
        \frac{t}{n} \sum_{\ell=t+1}^{n} \frac{1}{\ell-1} \biggl( 1 - \sum_{k=t+1}^{\ell-1} \frac{1}{k-1} \biggr) \geq e^{\sqrt{2}-2}\bigl(\sqrt{2}-1\bigr) + o(1) \approx 0.23 + o(1).
    \end{equation*}
\end{lemma}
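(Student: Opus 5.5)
The plan is to convert both harmonic sums into logarithms with explicit $o(1)$ error terms, reduce the statement to an integral in the variable $\tau = t/n$, and then evaluate that integral at $\tau = e^{\sqrt{2}-2}$. Set $c = e^{\sqrt{2}-2}$ and $t = \lfloor cn \rfloor$, so that $t/n = c + O(1/n)$.

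\textbf{Step 1: lower-bound the inner factor.} For every $\ell \in \{t+1,\dotsc,n\}$ we have
\begin{equation*}
\sum_{k=t+1}^{\ell-1} \frac{1}{k-1} = \sum_{k=t}^{\ell-2} \frac{1}{k} \leq \frac{1}{t} + \ln\Bigl(\frac{\ell-2}{t}\Bigr) \leq \ln\Bigl(\frac{\ell}{t}\Bigr) + \frac{1}{t}.
\end{equation*}
Hence
\begin{equation*}
1 - \sum_{k=t+1}^{\ell-1} \frac{1}{k-1} \geq 1 - \ln\Bigl(\frac{\ell}{t}\Bigr) - \frac{1}{t}.
\end{equation*}
Since $\ell \leq n$ and $n/t \to 1/c = e^{2-\sqrt{2}} < e$, the right-hand side is at least a positive constant minus $O(1/n)$. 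So every summand is nonnegative up to $O(1/n)$, and the weights $\frac{1}{\ell-1}$ may be multiplied in without any sign problem. The total error introduced is at most $\frac{1}{t}\sum_\ell \frac{1}{\ell-1} = O(1/n)$.

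\textbf{Step 2: pass to an integral.} We now need a lower bound for $\sum_{\ell=t+1}^n \frac{1}{\ell-1}\bigl(1-\ln(\ell/t)\bigr)$. The function $g(x) = \frac{1}{x}\bigl(1-\ln(x/t)\bigr)$ is decreasing on $[t,n]$, because its derivative $\frac{-2+\ln(x/t)}{x^2}$ is negative there. Also $\frac{1}{\ell-1} \geq \frac{1}{\ell}$, and the factor $1-\ln(\ell/t)$ is nonnegative. Together these give
\begin{equation*}
\sum_{\ell=t+1}^n \frac{1}{\ell-1}\Bigl(1-\ln\frac{\ell}{t}\Bigr) \geq \sum_{\ell=t+1}^n g(\ell) \geq \int_{t+1}^{n+1} g(x)\,\d x = \int_t^n g(x)\,\d x - O(1/n).
\end{equation*}
The last step uses $g = O(1/n)$ on the interval. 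Substituting $u = \ln(x/t)$ and writing $L = \ln(n/t)$, the integral equals
\begin{equation*}
\int_0^{L} (1-u)\,\d u = L - \frac{L^2}{2}.
\end{equation*}

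\textbf{Step 3: evaluate and confirm the choice of $c$.} Multiplying by $t/n$, the whole expression is at least $\tau\bigl(L - L^2/2\bigr) - O(1/n)$ with $\tau = t/n$ and $L = -\ln \tau$. This is continuous in $\tau$, and $\tau = c + O(1/n)$, so it converges to its value at $\tau = c$. At $\tau = c$ we have $L = 2-\sqrt{2}$, and
\begin{equation*}
L - \frac{L^2}{2} = (2-\sqrt{2}) - \frac{6-4\sqrt{2}}{2} = \sqrt{2}-1.
\end{equation*}
The value is therefore $c(\sqrt{2}-1) = e^{\sqrt{2}-2}(\sqrt{2}-1) \approx 0.23$. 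This also justifies the choice of $c$: maximizing $\phi(L) = e^{-L}(L - L^2/2)$ gives the condition $\phi'(L) = 0$, i.e. $L^2 - 4L + 2 = 0$, whose root in the relevant range is $L = 2-\sqrt{2}$.

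\textbf{Main obstacle.} The only delicate point is Step 1, namely keeping the direction of every inequality correct when replacing sums by logarithms. The inner sum enters with a minus sign, so it needs an upper bound, while the outer sum needs a lower bound. Nonnegativity of the factor $1 - \ln(\ell/t)$ is what lets the two estimates be combined, and it holds because $n/t < e$. All remaining errors are $O(1/n)$ and are absorbed into the $o(1)$ term.
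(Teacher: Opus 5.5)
Your proof is correct and follows essentially the same route as the paper: an upper bound on the inner harmonic sum by a logarithm, a lower bound on the outer sum by the integral of a decreasing function, and the substitution $u=\ln(x/t)$ giving $L-L^2/2$, evaluated at $L=2-\sqrt{2}$. The differences from the paper are cosmetic: it bounds the inner sum by $\ln\frac{\ell-2}{t-1}$ and shifts the integration variable, while you use $\ln\frac{\ell}{t}+\frac{1}{t}$ together with $\frac{1}{\ell-1}\ge\frac{1}{\ell}$. One trivial nit is that your intermediate expression $\frac{1}{t}+\ln\frac{\ell-2}{t}$ is negative at $\ell=t+1$, where the inner sum is empty, though the final bound $\ln\frac{\ell}{t}+\frac{1}{t}$ still holds there.
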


\begin{proof}
    Reindexing the inner summation, we obtain the expression
    \begin{align*}
        f_n(t) &= \frac{t}{n} \sum_{\ell=t+1}^n \frac{1}{\ell-1} \Biggl(1 - \sum_{k=t+1}^{\ell-1} \frac{1}{k-1}\Biggr) = \frac{t}{n} \sum_{\ell=t+1}^n \frac{1}{\ell-1} \Biggl(1 - \sum_{k=t}^{\ell-2} \frac{1}{k}\Biggr).
    \end{align*}
    To obtain a lower bound on $f_n(t)$, we use the standard approximation of the sum by the integral and obtain $\sum_{k=t}^{\ell-2} \frac{1}{k} \leq \int_{t-1}^{\ell-2} \frac{1}{x} \mathrm{d}x = \ln(\ell-2) - \ln(t-1)$. This yields
    \begin{align*}
        f_n(t) &\geq \frac{t}{n} \sum_{\ell=t+1}^n \frac{1}{\ell-1} \biggl(1 - \ln(\ell-2) + \ln(t-1) \biggr) \geq \frac{t}{n} \sum_{\ell=t+1}^n \frac{1}{\ell-1} \biggl(1 - \ln(\ell-1) + \ln(t-1) \biggr)\\
    \end{align*}
    We want to again use the standard approximation of the sum by the integral. We first notice that the term $\frac{1}{\ell-1}\bigl( 1 - \ln(\ell-1) + \ln(t-1)\bigr)$ is decreasing in~$\ell$.
    Hence, we obtain
    \begin{align*}
        \sum_{\ell=t+1}^n \frac{1}{\ell-1} \biggl(1 - \ln(\ell-1) + \ln(t-1) \biggr) &\geq \int_{t+1}^{n+1} \frac{1}{x-1} \biggl(1 - \ln(x-1) + \ln(t-1) \biggr) \mathrm{d}x \\
        &= \int_{t+1}^{n+1} \frac{1}{x-1} \biggl(1 - \ln\bigl(\tfrac{x-1}{t-1}\bigr)\biggr) \mathrm{d}x \\
        &= \int_{t}^{n} \frac{1}{x} \biggl(1 - \ln\bigl(\tfrac{x}{t-1}\bigr)\biggr) \mathrm{d}x.
    \end{align*}
    To solve the integral, we set $\varphi(x) = \ln\bigl(\frac{x}{t-1}\bigr)$ which yields $\varphi'(x) = \frac{1}{x}$. Setting $f(z) = 1-z$, we obtain via integration by substitution
    \begin{align*}
        \int_{t}^{n} \frac{1}{x} \biggl(1 - \ln\bigl(\tfrac{x}{t-1}\bigr)\biggr) \mathrm{d}x &= \int_t^n f(\varphi(x)) \,\varphi'(x)\,\mathrm{d}x \\
        &= \int_{\varphi(t)}^{\varphi(n)} f(z) \,\mathrm{d}z\\
        &= \biggl[u - \frac{1}{2}u^2 \biggr]^{\varphi(n)}_{\varphi(t)}\\
        &= \ln\bigl(\tfrac{n}{t-1}\bigr) - \frac{1}{2}\ln^2\bigl(\tfrac{n}{t-1}\bigr) - \ln\bigl(\tfrac{t}{t-1}\bigr) + \frac{1}{2}\ln^2\bigl(\tfrac{t}{t-1}\bigr).
    \end{align*}
    Putting everything together, we have
    \begin{align}
        f_n(t) &\geq \frac{t}{n}	\biggl(\ln\bigl(\tfrac{n}{t-1}\bigr) - \frac{1}{2}\ln^2\bigl(\tfrac{n}{t-1}\bigr) - \ln\bigl(\tfrac{t}{t-1}\bigr) + \frac{1}{2}\ln^2\bigl(\tfrac{t}{t-1}\bigr) \biggr) \notag\\
        &= \frac{t}{n} \biggl[\ln\bigl(\tfrac{n}{t}\bigr) - \frac{1}{2} \biggl(\ln^2\bigl(\tfrac{n}{t-1}\bigr) -\ln^2\bigl(\tfrac{t}{t-1}\bigr)\biggr)\biggr] \notag \\
        &= \frac{t}{n} \biggl[\ln\bigl(\tfrac{n}{t}\bigr) - \frac{1}{2} \ln^2\bigl(\tfrac{n}{t}\bigr) \biggr] + \frac{t}{2n}\biggl[\ln^2\bigl(\tfrac{n}{t}\bigr) - \ln^2\bigl(\tfrac{n}{t-1}\bigr)-\ln^2\bigl(\tfrac{t}{t-1}\bigr)\biggr] \label{eq:f}
    \end{align}
    Optimizing $t$ for this lower bound is still too complex. We instead consider the following approximation
    \begin{align*}
        g_n(t) =	\frac{t}{n} \biggl[\ln\bigl(\tfrac{n}{t}\bigr) - \frac{1}{2} \ln^2\bigl(\tfrac{n}{t}\bigr)\biggr].
    \end{align*}
    To find the optimal $t$ for this function, we substitute $x = t/n$ and obtain the new function
    \begin{align*}
        h(x) = x\biggl(\ln\bigl(\tfrac{1}{x}\bigr)	- \frac{1}{2}\ln^2\bigl(\tfrac{1}{x}\bigr)\biggr).
    \end{align*}
    The first order condition $h'(x) = 0$ yields the equation
    \begin{align*}
        0 = -1+2\ln\bigl(\tfrac{1}{x}\bigr) -\frac{1}{2} \ln^2\bigl(\tfrac{1}{x}\bigr)	
    \end{align*}
    Substituting $z = \ln\bigl(\frac{1}{x}\bigr)$, we obtain the equation
    \begin{align*}
        0 = -\frac{1}{2}z^2 + 2z -1,
    \end{align*}
    which has the solutions $z_1 = 2 - \sqrt{2}$ and $z_2 = 2 + \sqrt{2}$. As $z = \ln\bigl(\tfrac{1}{x}\bigr) = \ln\bigl(\frac{n}{t}\bigr) \geq 0$, only solution $z_1$ is suitable and we obtain
    \begin{align*}
        \frac{1}{x} &= e^{2-\sqrt{2}}	&&\Leftrightarrow & x = \frac{t}{n} = e^{\sqrt{2}-2} \approx 0.55.
    \end{align*}
    Plugging this into \eqref{eq:f} yields
    \begin{align*}
        e^{\sqrt{2}-2} \biggl[ \bigl(2-\sqrt{2}\bigr) - \frac{1}{2}\bigl(2-\sqrt{2}\bigr)^2 \biggr] + o(1) = e^{\sqrt{2}-2}\bigl(\sqrt{2}-1\bigr) + o(1).
    \end{align*}
    The distinction between $t = e^{\sqrt{2}-2}n$ and $t = \lfloor e^{\sqrt{2}-2}n \rfloor$ vanishes in the limit $n \to \infty$ which implies the result. 
\end{proof}

\newpage

\bibliographystyle{plainnat}
\bibliography{bibliography}

\end{document}